\tikzset{snake it/.style={decorate, decoration=snake}}
\tikzset{
    >=stealth',
    punkt/.style={
           rectangle,
           rounded corners,
           draw=black, very thick,
           text width=6.5em,
           minimum height=2em,
           text centered},
    pil/.style={
           ->,
           thick,
           shorten <=2pt,
           shorten >=2pt,},
  on each segment/.style={
    decorate,
    decoration={
      show path construction,
      moveto code={},
      lineto code={
        \path [#1]
        (\tikzinputsegmentfirst) -- (\tikzinputsegmentlast);
      },
      curveto code={
        \path [#1] (\tikzinputsegmentfirst)
        .. controls
        (\tikzinputsegmentsupporta) and (\tikzinputsegmentsupportb)
        ..
        (\tikzinputsegmentlast);
      },
      closepath code={
        \path [#1]
        (\tikzinputsegmentfirst) -- (\tikzinputsegmentlast);
      },
    },
  },
  mid arrow/.style={postaction={decorate,decoration={
        markings,
        mark=at position .5 with {\arrow[#1]{stealth'}}
      }}}
}
 \newcommand{\ket}[1]{|#1\rangle}
\newcommand{\ketbra}[2]{|#1\rangle\!\langle#2|}
\newtheorem{theorem}{Theorem}
\newtheorem{definition}[theorem]{Definition}
\newtheorem{protocol}[theorem]{Protocol}
\newenvironment{proof}[1][Proof]{\noindent\textbf{#1.}}{\ \rule{0.5em}{0.5em}}
\begin{document} 

\title{Localizing and excluding quantum information; or, how to share a quantum secret in spacetime}

\author[a]{Patrick Hayden}
\address{Stanford University}
\ead{phayden@stanford.edu}

\author[b]{Alex May}
\address{The University of British Columbia}
\ead{may@phas.ubc.ca}

\begin{abstract}
When can quantum information be localized to each of a collection of spacetime regions, while also excluded from another collection of regions? We answer this question by defining and analyzing the localize-exclude task, in which a quantum system must be localized to a collection of authorized regions while also being excluded from a set of unauthorized regions. This task is a spacetime analogue of quantum secret sharing, with authorized and unauthorized regions replacing authorized and unauthorized sets of parties. Our analysis yields the first quantum secret sharing scheme for arbitrary access structures for which the number of qubits required scales polynomially with the number of authorized sets. We also study a second related task called state-assembly, in which shares of a quantum system are requested at sets of spacetime points. We fully characterize the conditions under which both the localize-exclude and state-assembly tasks can be achieved, and give explicit protocols. Finally, we propose a cryptographic application of these tasks which we call party-independent transfer.
\end{abstract}

\maketitle

\pagebreak

\tableofcontents

\section{Introduction}

The study of the interplay between quantum theory and relativity has recently begun a new chapter with the consideration of quantum information tasks in a Minkowski space background~\cite{fuentes2005alice,rideout2012fundamental,martin2013processing}. For instance, the study of information causality \cite{pawlowski2009information} and of causal operators \cite{beckman2001causal} has given further insight into ties between information processing and relativity. Along with other results in this area \cite{hayden2016summoning,adlam2015quantum,hayden2016spacetime,kent2018unconstrained}, these can be placed into the general framework of quantum tasks in Minkowski space \cite{kent2012quantum}.

One task of particular interest is \emph{summoning}, defined by  Kent \cite{kent2013no}, where the associated no-summoning theorem is a statement of no-cloning appropriate to the spacetime setting. We have also argued that a generalization of the summoning task \cite{hayden2016summoning} provides an operational framework within which to study how quantum information can move through spacetime. The importance of having such a framework is highlighted by recent subtle questions concerning spacetime structure and the no-cloning principle in the context of black holes \cite{hayden2007black,almheiri2013black}. Understanding how a quantum system may be delocalized in Minkowski space should be a useful step towards understanding such fundamental puzzles. 

The study of quantum tasks in Minkowski space has been given a second motivation with the discovery of cryptographic protocols that exploit the properties of either or both of quantum mechanics and special relativity. Bit-commitment is a well-known example \cite{kent12,flyingqudits11}; other examples include coin flipping \cite{kent99}, key distribution (where signalling constraints enter into some security proofs \cite{barrett05,barrett12}), and two spacetime analogues of oblivious transfer dubbed location-oblivious transfer \cite{location11} and spacetime-constrained oblivious transfer \cite{pitalua2016spacetime}.

In quantum secret sharing, a central result of quantum cryptography, a quantum system is distributed among many parties such that only certain subsets of parties may collectively use their shares to reconstruct the system. Other subsets of parties are required to not be able to learn any information about the secret from their shares. In the context of quantum tasks in Minkowski space, where the movement of information in spacetime is central, and in the context of relativistic quantum cryptography, it is natural to consider a spacetime generalization of quantum secret sharing. 

To do this we replace the notions of authorized and unauthorized sets of parties with authorized and unauthorized spacetime regions. We define the localize-exclude task, where the goal is to move a quantum system through spacetime in such a way that it is localized to each of the authorized regions and excluded from the unauthorized ones. Figure \ref{fig:firstexample} gives a simple example. In theorem \ref{thm:localexclude}, we find necessary and sufficient conditions for completing the localize-exclude task. To argue that the localize-exclude task is a natural spacetime generalization of secret sharing, we show in the main text that there is a simple construction that embeds any quantum secret sharing scheme as a localize-exclude task, and that the conditions of this theorem reduce to those for quantum secret sharing in that case. 

\begin{figure}
\centering
\begin{tikzpicture}[scale=0.45]

\coordinate (P) at (0cm, 0cm);

\draw[dashed, blue] (-5,5) circle (2cm);
\node at (-5,5) {$\mathscr{A}_1$};
\draw[dashed,blue] (5,15) circle (2cm);
\node at (5,15) {$\mathscr{A}_2$};

\node[below] at (0,0) {s};
\draw[fill=yellow] (0,0) circle (0.15cm);

\draw[dashed,red] (4,5) -- (6,7) -- (-4,17) -- (-6,15) -- (4,5);
\node at (0,11) {$\mathscr{U}_1$};

\begin{scope}[shift={($(-5,0)$)}]       
    \draw[->] (-3,0) -- (-2,0);
    \node [right] at (-2,0) {$x$};
    \draw[->] (-3,0) -- (-3,1);
    \node [left] at (-3,1) {$t$};
\end{scope}

\end{tikzpicture}
\caption{An example of a localize-exclude task. A single copy of an unknown quantum system is initially localized near the spacetime point $s$, and needs to be localized to within regions $\mathscr{A}_1$ and $\mathscr{A}_2$, while avoiding region $\mathscr{U}_1$. Theorem \ref{thm:localexclude} shows that this is possible to do.}
\label{fig:firstexample}
\end{figure}
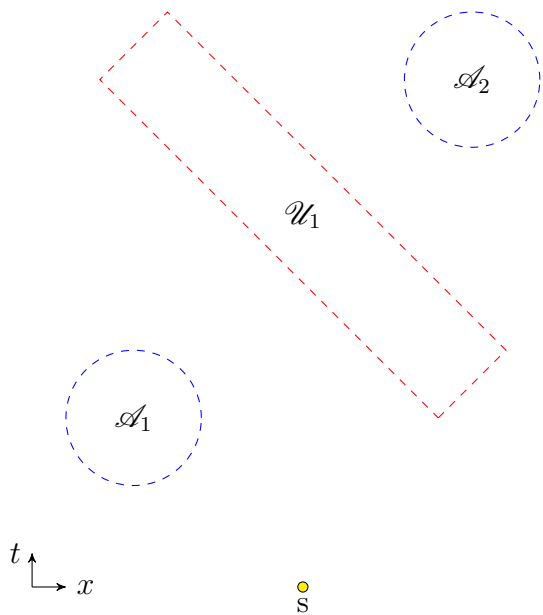

In the summoning task one party, Bob, puts in requests for the quantum system at certain spacetime points, asking that the system be returned at one of another set of points. The localize-exclude task removes this structure, but adds a notion of unauthorized region. It is interesting to also consider a task in the request-return setting, but which includes unauthorized regions. In this \emph{state-assembly} task, we consider many parties Bob$_i$ who may each request a share of the quantum system at an associated spacetime region $D_i$. Alice should respond to the collection of requests given by the Bobs in a careful way: she should hand over a collection of shares sufficient to construct a single copy of the system when the collection of requests is authorized, and she should not reveal any information about the system when that collection is unauthorized. The conditions for Alice to complete this task are the same as for localize-exclude in the case of causally separated regions, but differ when non-trivial causal structures are considered. In theorem \ref{thm:authorizedunauthorizedassembly} below we precisely characterize the conditions under which this task can be completed, and describe an explicit protocol for completing it when it is possible.

Together the state-assembly and localize-exclude tasks provide a rich set of scenarios to consider. We suggest \emph{party-independent transfer} as a potential cryptographic application of this framework, a task where two other parties wish to receive information from Alice and want the information they receive to be both private and independent of their identity. We propose a protocol for completing this task which is built on the state-assembly task. Establishing the security of this protocol we leave to future work.

The layout of this paper is as follows. Section \ref{sec:localexclude} gives the necessary definitions to study localization to arbitrary spacetime regions and proves theorem \ref{thm:localexclude}, which characterizes the localize-exclude task. We discuss the relation between localize-exclude and quantum secret sharing in the same section. In section \ref{sec:assembly} we discuss state-assembly and give its characterization. In section \ref{sec:partyindependenttransfer} we study the party-independent transfer task. Two appendices are included which clarify the relationship of this work to earlier work on summoning. The first shows that state-assembly is equivalent to a certain summoning task, and the second addresses the points raised by Adlam and Kent \cite{adlam2015quantum} against interpreting summoning tasks in terms of the localization of information.

\section{Localizing and excluding quantum information}\label{sec:localexclude}

\subsection{Localizing quantum information to many regions}

As a first step towards characterizing the localize-exclude task we discuss the problem of localizing quantum information to a collection of spacetime regions, leaving excluded regions to the next section. To do this we consider the following setting. Alice holds the $A$ subsystem of a pure state $\ket{\Psi}_{RA}$, with $A$ recorded into a collection of classical and quantum systems held within secure laboratories not accessible to her adversary, Bob\footnote{Alice and Bob are both agencies, who have many agents that may be distributed to many different laboratories.}. We would like to ask where system $A$ is. For instance, Alice might have recorded $A$ into an error-correcting code and distributed the shares of this code to various laboratories. Further, she might be constantly rerouting these shares between labs, so that shares are held only at certain labs between specified times. 

We can ask where the subsystem is in spacetime by temporarily relaxing the security of Alice's labs --- we give Bob access to some collection of Alice's labs for certain time intervals. If by accessing these labs Bob is able to prepare the $A$ system (potentially making use of later data processing), we say that system $A$ was localized to the collection of labs and intervals of time Bob accessed. More generally, we can abstract away from the language of labs and time intervals and give a more general definition. 
\begin{definition}\label{def:localized}
Suppose one party, Alice, holds system $A$ of a quantum state $\ket{\Psi}_{AR}$. Then we say the subsystem $A$ is \textbf{localized} to a spacetime region $\Sigma$ if a second party, Bob, for whom the state is initially unknown is able to prepare the $A$ system by collecting quantum and classical systems from within $\Sigma$, and then applying later data processing.
\end{definition}
Conversely, if Bob is unable to learn anything about $A$ we say the system is \textbf{\emph{excluded}} from $\Sigma$. Note that the later data processing referred to in the definition may occur outside of the region $\Sigma$. Further, a system may be neither localized nor excluded from a region if partial information about the system is available there. 

To be more precise we should specify how it is verified that Bob holds the $A$ system after he has accessed $\Sigma$. One natural possibility is to introduce a third party, call him Charlie, who plays the role of a referee. We have Charlie hold both the purifying system $R$ of $\ket{\Psi}_{AR}$ as well as a classical description $\ket{\Psi}_{AR}$. To verify Bob holds the system then, we have Bob pass the $A$ system to Charlie, who performs a projective measurement of the $AR$ system in a basis that includes $\ket{\Psi}_{AR}$. If Alice can pass Charlie's test with certainty, we declare that Alice localized the system to $\Sigma$. Of course, Alice will also pass this test with some probability so long as Charlie's final state has non-trivial overlap with $\ket{\Psi}_{AR}$\footnote{In the context of the localize-exclude tasks we consider later, we may be interested in whether or not a quantum system can be localized to two or more regions with fixed relative positions, rather than if a system can be localized to one particular spacetime region. In this case, and when the spacetime is suitably translation invariant, one can consider repeating the task many times (sequentially or in parallel). In this scenario Charlie could determine with what probability Alice is able to complete the task.}.

It is interesting to compare this notion of localizing a quantum system to a spacetime region to a notion of spacetime localization based on the summoning task \cite{hayden2016summoning}. Perhaps the key distinction is that, in the definition given here, information processing may occur outside the spacetime region in order to prepare the system. This point carries with it certain subtleties that are taken up in appendix \ref{appendix:equivalence} and the discussion. The key advantage of definition \ref{def:localized} however is its applicability to regions of arbitrary shape. 

One strategy for hiding a quantum system from Bob would be for Alice to send $A$ into a region $\Sigma$, while also sending various decoys $A_{d1},A_{d2},...$ so that Bob, though he may collect all of the systems $A, A_{d1}, A_{d2},...$ is left unsure as to which system to hand to Charlie. This reveals a finer point to definition \ref{def:localized}: the system Bob is searching for may enter $\Sigma$, but if appropriate classical instructions do not also enter $\Sigma$ (in this case a label denoting which system actually holds $A$), then definition \ref{def:localized} says the system is not localized there. To avoid confusion around this point we will always have Alice, at some early time, reveal the classical instructions that constitute her protocol to Bob. The only information Alice will not broadcast is a classical string $k$ (as well as the quantum system itself). As we will see, protocols where Alice holds only a secret key $k$ and reveals all other details to Bob are sufficient to complete any physically possible localize-exclude task, so this restriction on Alice amounts to a useful simplification of notation and language.

In the protocols we construct Alice will encode her quantum system into an error-correcting code that corrects erasure errors, and then apply a quantum one-time pad to each of the shares in the quantum code. Alice does not broadcast the classical strings used in the one-time pads; taken together these constitute her secret key $k$. However, she does reveal her procedure for putting $A$ into an error-correcting code and applying the one-time pad, and reveals the spacetime trajectories of each share in the code. Within this context, Bob reconstructs $A$ by accessing a region $\Sigma$ whenever a correctable subset of shares in the error-correcting code along with their corresponding classical keys from the the one-time pad pass through $\Sigma$. 

Definition \ref{def:localized} specifies what is meant by a quantum system being localized to a single spacetime region. To extend this to multiple regions, we define the localize task as follows.
\begin{definition} A \textbf{localize task} is a task involving two agencies, Alice and Bob, specified by a tuple $\{A,s,\{\mathscr{A}_1,...,\mathscr{A}_n\}\}$, consisting of:
\begin{itemize}
    \item A quantum system $A$. In general $A$ may be a subsystem of some overall pure state $\ket{\Psi}_{AR}$. The state on $AR$ is unknown to both Alice and Bob. 
    \item A start point $s$, at which Alice initially holds system $A$
    \item A collection of spacetime regions $\{\mathscr{A}_1,...,\mathscr{A}_n\}$, which we call the authorized regions
\end{itemize}
Alice successfully completes the task if Bob is able to prepare system $A$ after he accesses any one of the $\mathscr{A}_i$.
\end{definition}
If Alice is able to successfully complete the localize task with regions $\{\mathscr{A}_1,...,\mathscr{A}_n\}$ we say she has localized the system to each of those regions. The authorized regions may be of arbitrary shape and may overlap. 

\begin{figure}
\begin{center}
\begin{subfigure}{.45\textwidth}
\begin{center}
\begin{tikzpicture}[scale=0.3]

\draw (-4,4) circle (3);
\node at (-4,4) {$\Sigma_i$};
\draw (4,14) circle (3);
\node at (4,14) {$\Sigma_j$};

\draw[thick,postaction={on each segment={mid arrow}}] (-2,4) -- (4,12);

\draw plot [mark=*, mark size=4] coordinates{(-2,4)};
\node[below] at (-2,4) {$q_i$};
\draw plot [mark=*, mark size=4] coordinates{(4,12)};
\node[right] at (4,12) {$q_j$};

\end{tikzpicture}
\end{center}
\caption{}
\label{fig:CausallyConnected}
\end{subfigure}
\hfill
\begin{subfigure}{.45\textwidth}
\begin{center}
\begin{tikzpicture}[scale=0.35]
  
\draw[lightgray,fill=lightgray] (-6,8) -- (0,14) -- (6,8) -- (6,6) -- (0,0) -- (-6,6) -- (-6,8);
\draw[thick, black,fill=gray] (-6,6) -- (-6,8) -- (6,8) -- (6,6) -- (-6,6);
\node at (0,7) {$\Sigma$};

\draw plot [mark=*, mark size=4] coordinates{(2,3)};
\node[below] at (2,3) {$p$};

\draw[thick,postaction={on each segment={mid arrow}}] (2,3) to [out=90,in=-120] (5,10);

\end{tikzpicture}
\end{center}
\caption{}
\label{fig:DomainOfDependence}
\end{subfigure}
\caption{Two geometric notions used in the text. a) Two causally connected regions. Two spacetime regions $\Sigma_i$ and $\Sigma_j$ are said to be causally connected if there is a point $q_i$ in $\Sigma_i$ and $q_j$ in $\Sigma_j$ such that there is a causal curve from $q_i$ to $q_j$, or from $q_j$ to $q_i$. b) The domain of dependence (light grey) of a spacetime region $\Sigma$ (dark grey). The domain of dependence is defined as the set of all points $p$ in the spacetime such that all causal curves passing through $p$ must also enter $\Sigma$.}
\label{fig:geometryDefs}
\end{center}
\end{figure}
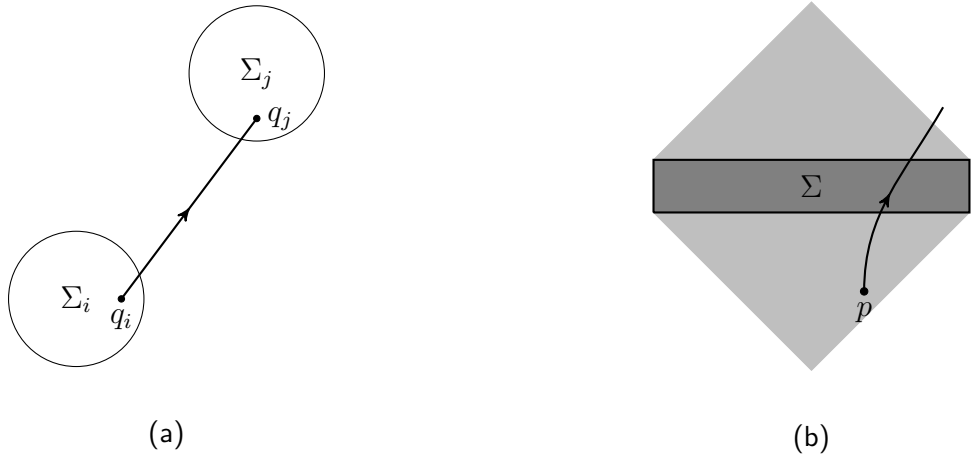

To analyze this task it is useful to introduce some language. We give the following definition which specifies a relation between pairs of spacetime regions. 
\begin{definition}\label{def:connected}
Two spacetime regions $\Sigma_i$ and $\Sigma_j$ are said to be \textbf{causally connected} if there is a point $q_i$ in $\Sigma_i$ and $q_j$ in $\Sigma_j$ such that there is a causal curve from $q_i$ to $q_j$, or from $q_j$ to $q_i$.
\end{definition}
We illustrate this definition in figure \ref{fig:CausallyConnected}. If two regions are not causally connected we say they are \textbf{causally disjoint}. In the context of the localize-exclude task discussed in the next section we will also need one further definition relating to spacetime geometry.
\begin{definition}
The \textbf{domain of dependence} of a spacetime region $\Sigma$, denoted $D(\Sigma)$, is the set of all points $p$ such that every causal curve through $p$ must also enter $\Sigma$. 
\end{definition}
This definition is illustrated in figure \ref{fig:DomainOfDependence}. 

\begin{figure}
\begin{center}
\begin{tikzpicture}[scale=0.6]

\draw[dashed, blue] (-5,8) circle (1.5);
\node at (-5,8) {$\mathscr{A}_1$};
\draw[dashed,blue] (5,8) circle (1.5);
\node at (5,8) {$\mathscr{A}_2$};

\draw[->] (0,0) -- (0,1);

\node[below] at (0,0) {s};
\draw[fill=yellow] (0,0) circle (0.15cm);

\draw[black] (-1,1) -- (1,1) -- (1,2) -- (-1,2) -- (-1,1);
\draw (-0.6,1.1) arc (180:0:0.6);
\draw (0,1.1) -- (0.7,1.8);

\begin{scope}[shift={($(-5,0)$)}]       
    \draw[->] (-3,0) -- (-1,0);
    \node [below] at (-1,0) {$x$};
    \draw[->] (-3,0) -- (-3,2);
    \node [left] at (-3,2) {$t$};
\end{scope}

\draw[double,postaction={on each segment={mid arrow}}] (0,2) -- (-4,7);
\draw[double,postaction={on each segment={mid arrow}}] (0,2) -- (4,7);

\draw[dashed,blue] (11,0) circle (1.5);
\node at (11,0) {$\mathscr{A}_1$};
\draw[fill=black] (10,0) circle (0.15cm);
\node[left] at (10,0) {$p_1$};
\draw[dashed,blue] (11,5) circle (1.5);
\node at (11,5) {$\mathscr{A}_2$};
\draw[fill=black] (10,5) circle (0.15cm);
\node[left] at (10,5) {$p_2$};

\draw[thick] (9.5,-2) to [out=80,in=-100] (10,0);
\draw[thick,postaction={on each segment={mid arrow}}] (10,0) to [out=80,in=-80] (10,5);
\draw[thick] (10,5) to [out=100,in = -80] (9.5,7);
\node[below] at (9.5,-2) {$\bar{E}$};

\draw[thick,postaction={on each segment={mid arrow}}] (1,-2) to [out=110,in=-90] (0.5,1);
\node[below] at (1,-2) {$E$};

\draw[dashed] (1.5,-2.5) -- (9.15,-2.5);
\node[below] at (5.325,-2.5) {$\ket{\Psi^+}$};

\end{tikzpicture}
\end{center}
\caption{An arrangement of two authorized regions that has the minimal requirements to satisfy the conditions of theorem \ref{thm:simplelocalize}. By the first condition $\mathscr{A}_1$ and $\mathscr{A}_2$ are causally connected. This guarantees the existence of a point $p_1$ in $\mathscr{A}_1$ which is in the causal future of some point $p_2$ in $\mathscr{A}_2$ (up to relabelling). The second condition gives that each region have at least one point in the future light cone of $s$. However, the regions $\mathscr{A}_1$ and $\mathscr{A}_2$ may be disconnected (as shown here) and so satisfy this requirement while having the points $p_1,p_2$ be outside the future light cone of $s$. To localize a system $A$ to both regions a maximally entangled state $\ket{\Psi^+}_{E\bar{E}}$ is shared between $s$ and $p_1$. Near to $s$ the $A$ system is teleported using this entanglement, and the entangled system at $p_1$ is sent to $p_2$. Meanwhile, the classical measurement outcomes from the teleportation protocol are sent to the points in $\mathscr{A}_1$ and $\mathscr{A}_2$ which are in the causal future of $s$. Each region has both the classical measurement outcomes and the entangled particle pass through it, so the $A$ system is localized to each.}
\label{fig:generaltworegion}
\end{figure}
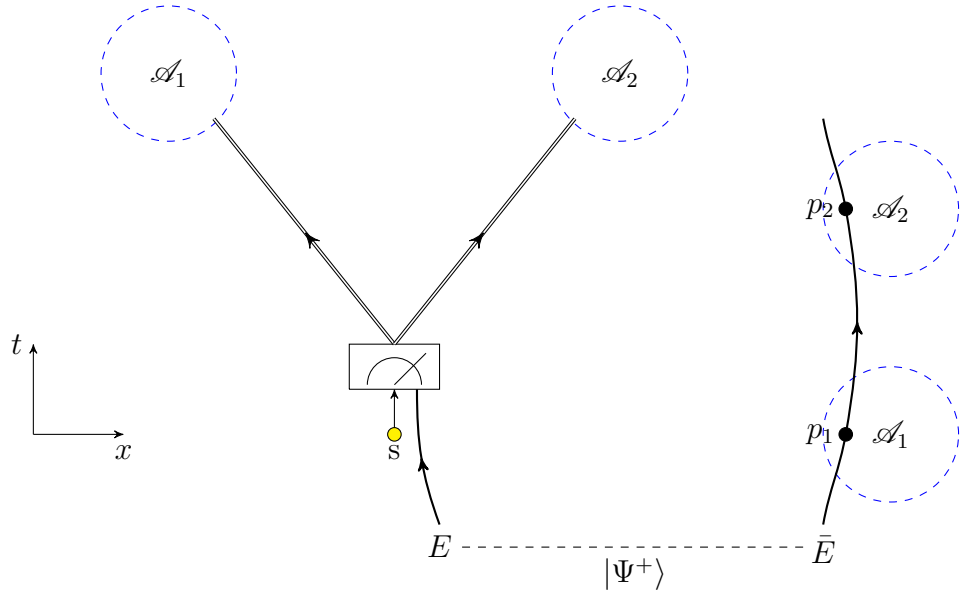

As a first step towards the more general scenario consider the localization of a quantum system to two authorized regions $\mathscr{A}_1$ and $\mathscr{A}_2$. 
\begin{theorem}\label{thm:simplelocalize}
Given a quantum system initially localized near a spacetime point $s$, the system may be localized to both of the spacetime regions $\mathscr{A}_1$ and $\mathscr{A}_2$ if and only if the following two conditions hold.
\begin{enumerate}
    \item $\mathscr{A}_1$ and $\mathscr{A}_2$ both have a point in the future light cone of $s$.
    \item $\mathscr{A}_1$ and $\mathscr{A}_2$ are causally connected. 
\end{enumerate}
\end{theorem}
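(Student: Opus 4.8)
The statement is an equivalence, so the plan is to show the two conditions are \emph{sufficient} by exhibiting an explicit protocol, and \emph{necessary} by appealing to relativistic causality and no-cloning. I would do sufficiency first, since the construction is concrete and fixes the intuition for the whole theorem: the idea is to teleport $A$ so that reconstructing it always requires two separately routable ingredients, a quantum ``ciphertext'' particle and a classical key, and then to arrange that both ingredients pass through each authorized region.

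Assume both conditions. By condition 2, pick $p_1 \in \mathscr{A}_1$ and $p_2 \in \mathscr{A}_2$ joined by a causal curve, relabelled so that $p_2$ lies in the causal future of $p_1$; by condition 1, pick $r_1 \in \mathscr{A}_1$ and $r_2 \in \mathscr{A}_2$ in the future light cone of $s$. Before the protocol I would distribute a maximally entangled pair $\ket{\Psi^+}_{E\bar E}$, created in the common causal past of $s$ and $p_1$ (nonempty for any two Minkowski points), sending $E$ to $s$ and $\bar E$ to $p_1$. At $s$, Alice teleports $A$ into $\bar E$, recording the two-bit Bell outcome as her secret key $k$ (all other details of the protocol being broadcast to Bob, as in the setup preceding definition \ref{def:localized}). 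She transports the single particle $\bar E$ along the causal curve from $p_1$ to $p_2$, and broadcasts $k$ from $s$ to $r_1$ and $r_2$. Then both ingredients needed to recover $A$ are present inside $\mathscr{A}_1$ (at $p_1$ and $r_1$) and inside $\mathscr{A}_2$ (at $p_2$ and $r_2$); accessing either region lets Bob collect $\bar E$ and $k$ and apply the appropriate Pauli correction. Since Bob accesses only one region there is no cloning, as the \emph{same} particle merely passes from $\mathscr{A}_1$ to $\mathscr{A}_2$.

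For necessity I would treat the conditions separately. For condition 1, $A$ is held at $s$, so every $A$-dependent operation lies in the causal future of $s$; if a region met no point of that future, no-signalling would make the systems collectible there independent of $A$, precluding reconstruction. For condition 2, if $\mathscr{A}_1$ and $\mathscr{A}_2$ were causally disjoint they would be mutually spacelike, so the systems gathered from one are disjoint from those gathered from the other, and successful localization to both would let two agents independently output the unknown state $A$, contradicting no-cloning.

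The step I expect to need the most care is the no-cloning argument, because definition \ref{def:localized} allows the final data processing to happen \emph{outside} the region. I would stress that what no-cloning constrains is only the disjointness of the two sets of \emph{collected input} systems, which spacelike separation guarantees independently of where the later processing occurs: a causal worldline cannot meet two spacelike-separated regions, so no physical carrier can be fed into both reconstructions. Pinning down the no-signalling claim for condition 1 — phrasing it as a statement about the reduced state of the systems collectible within $\mathscr{A}_i$ — is the other place I would be careful, invoking the secure-laboratory framework set up before the definition.
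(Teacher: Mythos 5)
Your proposal is correct and follows essentially the same route as the paper: necessity of condition (i) from no-signalling, necessity of condition (ii) from no-cloning applied to causally disjoint regions, and sufficiency via the teleportation protocol that shares a maximally entangled pair between $s$ and $p_1$, routes $\bar E$ along the causal curve from $p_1$ to $p_2$, and broadcasts the classical teleportation outcome from $s$ into each region. Your added care about the collected inputs being disjoint under spacelike separation (independent of where the later data processing occurs) is a welcome sharpening of the paper's brief causality argument, but it is not a different proof.
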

\begin{proof} \,First, note that if an authorized region is entirely outside the future light cone of the start point then successfully localizing the system to that region would constitute superluminal communication. Thus, the first condition is necessary. To see necessity of the second condition suppose there exists a protocol for localizing a quantum system to two causally disjoint regions $\mathscr{A}_1$ and $\mathscr{A}_2$. Then by definition it is possible to construct the system by accessing the region $\mathscr{A}_1$, and by accessing $\mathscr{A}_2$. By causality however accessing region $\mathscr{A}_1$ cannot affect the system constructed from $\mathscr{A}_2$, and vice versa, so it would be possible to construct two copies of the quantum system. But this constitutes cloning, so no such protocol can exist.

To understand sufficiency we construct a task with the minimal properties specified by the two assumed conditions. Such a task is shown in figure \ref{fig:generaltworegion}. There, a point $p_1\in \mathscr{A}_1$ is causally connected to $p_2\in \mathscr{A}_2$, and each of $\mathscr{A}_1$ and $\mathscr{A}_2$ have a point in the future light cone of $s$. However, $p_1$ and $p_2$ sit outside the future light cone of $s$. Nonetheless it is straightforward to complete such a task. To do so a system $E$ is maximally entangled with $\bar{E}$, then $E$ is brought to $s$ while $\bar{E}$ is brought to $p_1$. At $s$, $E$ is used to teleport the $A$ system onto the $\bar{E}$ system. The measurement outcome from the teleportation is sent to $\mathscr{A}_1$ and $\mathscr{A}_2$ from $s$. Meanwhile, $\bar{E}$ is sent from $p_1$ to $p_2$. Each authorized region contains the classical measurement outcome and the system $\bar{E}$, so accessing either region allows reconstruction of $A$. 
\end{proof}

We can now move on to understanding localize tasks with arbitrary numbers of authorized regions. We find in particular that it is only the structure of causal connections between pairs of regions and the start point that are needed to characterize a task as possible or impossible.
\begin{theorem}\label{thm:localize}
Given a quantum system $A$ initially localized near a spacetime point $s$, the system may be localized to each spacetime region $\mathscr{A}_i$ in a collection $\{\mathscr{A}_1,....,\mathscr{A}_n\}$ if and only if the following two conditions hold.
\begin{enumerate}
\item Each region $\mathscr{A}_i$ has at least one point in the causal future of $s$
\item Each pair of regions $(\mathscr{A}_i,\mathscr{A}_j)$ is causally connected. 
\end{enumerate}
\end{theorem}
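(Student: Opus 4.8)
The plan is to prove necessity by reducing to the two-region result, Theorem \ref{thm:simplelocalize}, and to prove sufficiency by generalizing the entanglement-and-broadcast construction of Figure \ref{fig:generaltworegion}, inserting a quantum erasure-correcting code to handle the routing of quantum information through many regions at once.

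For necessity I would argue each condition on a sub-collection of the regions. Any protocol that localizes $A$ to every $\mathscr{A}_i$ in particular localizes $A$ to the single region $\mathscr{A}_i$; so if $\mathscr{A}_i$ lay entirely outside the causal future of $s$, Bob could prepare $A$ from systems whose state is causally independent of $A$, which is superluminal signalling, giving condition 1. Likewise such a protocol localizes $A$ to $\mathscr{A}_i$ and, separately, to $\mathscr{A}_j$ for every pair $(i,j)$, since Bob may access either region; were that pair causally disjoint, the argument in the proof of Theorem \ref{thm:simplelocalize} would let Bob assemble two causally independent copies of $A$, contradicting no-cloning. Thus both conditions are forced, each being an instance of the two-region necessity argument applied to one region or one pair.

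For sufficiency I would first separate the classical and quantum logistics, as in the two-region protocol. At $s$ one teleports $A$ onto a pre-distributed entangled system; by condition 1 each region contains a point $c_i \in \mathscr{A}_i$ in the causal future of $s$, so the classical teleportation outcome can be broadcast from $s$ into every region. Since entanglement may be pre-shared between any two points (it requires only a common causal past, which always exists in Minkowski space), the remaining problem is purely to route the post-teleportation quantum information so that each region, combined with the broadcast correction data, can reconstruct $A$. Condition 2 supplies, for every pair of regions, a causal curve along which quantum systems may be sent from one to the other; orienting each connected pair produces a tournament on the regions, and I would use its Hamiltonian path to organize the routing.

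The main obstacle is that pairwise causal connection does not hand us a single causal curve threading one point of every region: the point at which a region receives quantum information need not causally precede the point from which it must forward it, so one physical system cannot in general be relayed through all $n$ regions the way $\bar E$ is relayed through two. To get around this I would encode the teleported state not as a bare system but into a quantum erasure-correcting code (equivalently a quantum secret-sharing scheme) and route its shares along the pairwise causal curves so that each region accumulates a reconstructing set of shares. Because every pair of regions is causally connected, the reconstructing sets delivered to different regions are causally linked rather than independent, so no-cloning is never violated, and condition 2 is exactly the resource the routing consumes. The crux is to show that for \emph{any} pairwise-connected configuration the tournament structure admits such a share assignment --- delivering a reconstructing set to every region while respecting the causal curves and monogamy of entanglement --- with the two-region protocol serving as the base case.
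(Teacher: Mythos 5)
Your necessity argument is the paper's own: both conditions follow by restricting the protocol to a single region or to a single pair and invoking the no-signalling and no-cloning arguments of Theorem~\ref{thm:simplelocalize}. Your sufficiency strategy also has the same skeleton as the paper's --- associate one quantum share to each causally connected pair, localize that share to its two regions using the two-region protocol of Theorem~\ref{thm:simplelocalize}, and arrange that the shares reaching any one region form a reconstructing set.

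The proposal stops, however, exactly where the real work begins. You write that ``the crux is to show that for any pairwise-connected configuration the tournament structure admits such a share assignment,'' and you do not show it. What is needed is an erasure-correcting code with one share per edge of the complete graph on the $n$ regions such that the $n-1$ shares on the edges incident to any single vertex suffice to recover the logical system; without exhibiting such a code, or citing one, the sufficiency direction is incomplete. The paper closes this gap by invoking the codeword-stabilized codes constructed in \cite{hayden2016summoning} (two physical qubits per edge, $2\binom{n}{2}$ in total; see also \cite{wu2017efficient,hayden2016spacetime}), which have exactly this ``star of any vertex reconstructs'' property. Two smaller points: the tournament and Hamiltonian-path machinery is unnecessary --- no global ordering of the regions is used, only the unordered pairwise connections, and the two-region protocol is indifferent to the direction of each causal connection; and monogamy of entanglement imposes no constraint here, since each share is an independent subsystem of a single code state and is routed with its own independently pre-shared entangled pair.
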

\begin{proof}\,Necessity of the two conditions follows from the same arguments as in the two region case given as theorem \ref{thm:simplelocalize}: localizing a system to a region outside of its future light cone violates no signaling, and localizing a system to two spacelike separated regions would allow two copies of the system to be produced. 

To demonstrate sufficiency we construct an explicit protocol for completing any task satisfying the two conditions. To this end it is useful to introduce a directed graph $G$ which describes the causal structure of the task: for each authorized region $\mathscr{A}_i$ introduce a vertex, also labelled $\mathscr{A}_i$, to the graph. For each pair of regions $(\mathscr{A}_i,\mathscr{A}_j)$ such that there is a point in $\mathscr{A}_j$ connected by a causal curve to a point in $\mathscr{A}_i$ introduce a directed edge $(\mathscr{A}_i\rightarrow \mathscr{A}_j)$. An example of a task and its associated graph is given as figure \ref{fig:disconnectedtriangle}.

From the no-cloning theorem it follows that some quantum information must be shared between every pair of authorized regions. In our construction these quantum systems that move between pairs of authorized regions form the shares of an error-correcting code. In particular, for each edge in the graph $G$ we associate one share. In theorem \ref{thm:simplelocalize} and figure \ref{fig:generaltworegion} we showed how to localize a quantum system to two authorized regions whenever they share a causal connection. We can execute this protocol on the shares of our error-correcting code to ensure the share associated to edge $\mathscr{A}_i\rightarrow \mathscr{A}_j$ is localized to both $\mathscr{A}_i$ and $\mathscr{A}_j$. To complete the task then, our error-correcting code should have the property that, given any vertex, the set of shares associated to the edges attached to that vertex are sufficient to construct the initial system $A$. We illustrate the requirement on this code in figure \ref{fig:graphcode}. 

In fact, given that every pair of vertices in this graph share an edge, which is guaranteed by condition (ii), such error-correcting codes have already been constructed. To encode finite-dimensional quantum systems we constructed such codes using the codeword-stabilized formalism in the context of a similar summoning problem \cite{hayden2016summoning}. Constructions for continuous variable systems have also been given \cite{hayden2016spacetime} and then adapted to the finite-dimensional case \cite{wu2017efficient}. In the code-word stabilized construction a single logical qubit is recorded using 2 physical qubits for each edge in the graph, resulting in a total of $2 {n \choose 2}$ physical qubits for $n$ the number of authorized regions. 
\end{proof}

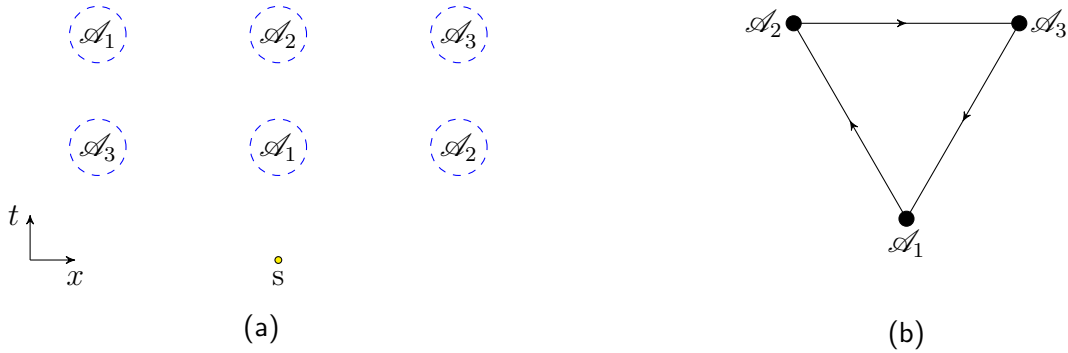
\begin{figure}
\begin{center}
\begin{subfigure}{.45\textwidth}
\begin{tikzpicture}[scale=0.3]

\draw[dashed, blue] (0,0) circle (1.25);
\node at (0,0) {$\mathscr{A}_3$};
\draw[dashed,blue] (0,5) circle (1.25);
\node at (0,5) {$\mathscr{A}_1$};

\draw[dashed, blue] (8,0) circle (1.25);
\node at (8,0) {$\mathscr{A}_1$};
\draw[dashed,blue] (8,5) circle (1.25);
\node at (8,5) {$\mathscr{A}_2$};

\draw[dashed, blue] (16,0) circle (1.25);
\node at (16,0) {$\mathscr{A}_2$};
\draw[dashed,blue] (16,5) circle (1.25);
\node at (16,5) {$\mathscr{A}_3$};

\begin{scope}[shift={($(0,-5)$)}]       
    \draw[->] (-3,0) -- (-1,0);
    \node [below] at (-1,0) {$x$};
    \draw[->] (-3,0) -- (-3,2);
    \node [left] at (-3,2) {$t$};
\end{scope}

\node[below] at (8,-5) {s};
\draw[fill=yellow] (8,-5) circle (0.15cm);

\end{tikzpicture}
\caption{}
\end{subfigure}
\hfill
\begin{subfigure}{.45\textwidth}
\begin{center}
\begin{tikzpicture}[scale=0.5]
  
\draw[fill=black] (0,0) circle (0.2cm);
\node[below] at (0,0) {$\mathscr{A}_1$};

\draw[postaction={on each segment={mid arrow}}] (0,0) -- (-3,5.2);

\draw[fill=black] (-3,5.2) circle (0.2cm);
\node[left] at (-3,5.2) {$\mathscr{A}_2$};

\draw[postaction={on each segment={mid arrow}}] (-3,5.2) -- (3,5.2);

\draw[fill=black] (3,5.2) circle (0.2cm);
\node[right] at (3,5.2) {$\mathscr{A}_3$};

\draw[postaction={on each segment={mid arrow}}] (3,5.2) -- (0,0);

\end{tikzpicture}
\end{center}
\caption{}
\end{subfigure}
\caption{An example of a task with three authorized regions $\mathscr{A}_1,\mathscr{A}_2$ and $\mathscr{A}_3$. (a) The arrangement of the regions in spacetime, notice that each region consists of two disconnected ball-shaped regions. (b) The corresponding graph of causal connections, used in the proof of theorem \ref{thm:localize} to construct the error-correcting code needed to complete the task.}
\label{fig:disconnectedtriangle}
\end{center}
\end{figure}

\begin{figure}
\begin{center}
\begin{subfigure}{.45\textwidth}
\begin{center}
\begin{tikzpicture}[scale=0.5]

	\coordinate (A) at (0,0);
	\coordinate (B) at (7,0);
	\coordinate (C) at (7,-7);
	\coordinate (D) at (0,-7);
	
	\draw (A) circle (0.2);
	\draw (B) circle (0.2);
	\draw (C) circle (0.2);
	\draw (D) circle (0.2);
	
	\draw[thick,postaction={on each segment={mid arrow}}] (0.2,0) -> (6.8,0);
	\draw[thick,postaction={on each segment={mid arrow}}] (7,-0.2) -> (7,-6.8);
	\draw[thick,postaction={on each segment={mid arrow}}] (6.8,-7) -> (0.2,-7);
	\draw[thick,postaction={on each segment={mid arrow}}] (0,-6.8) -> (0,-0.2);
	
	\draw[thick,postaction={on each segment={mid arrow}}] (6.86,-6.86) -> (1,-1);
	\draw[thick] (1,-1) -- (0.14,-0.14);
	\draw[thick,postaction={on each segment={mid arrow}}] (6.86,-0.14) -> (1,-6);
	\draw[thick] (1,-6) -- (0.14,-6.86);
	
	\node[above left] at (A) {$\mathscr{A}_1$};
	\node[above right] at (B) {$\mathscr{A}_2$};
	\node[below right] at (C) {$\mathscr{A}_3$};
	\node[below left] at (D) {$\mathscr{A}_4$};

\end{tikzpicture}
\end{center}
\caption{}
\end{subfigure}
\hfill
\begin{subfigure}{.45\textwidth}
\begin{center}
\begin{tikzpicture}[scale=0.5]
  
	\coordinate (P) at (0,-10);
	\coordinate (Q) at (7,-10);
	\coordinate (R) at (7,-17);
	\coordinate (W) at (0,-17);
	
  
	\draw (P) circle (0.2);
	\draw (Q) circle (0.2);
	\draw (R) circle (0.2);
	\draw (W) circle (0.2);
	
  
	\draw[thick] (0.2,-10) -- (6.8,-10);
	
	\draw[thick] (7,-10.2) -- (7,-16.8);
	\draw[thick] (6.8,-17) -- (0.2,-17);
	\draw[thick] (0,-16.8) -- (0,-10.2);
	\draw[thick] (6.86,-16.86) -- (0.14,-10.14);
	\draw[thick] (6.86,-10.14) -- (0.14,-16.86);
  
	\node[above left] at (P) {$\mathscr{A}_1$};
	\node[above right] at (Q) {$\mathscr{A}_2$};
	\node[below right] at (R) {$\mathscr{A}_3$};
	\node[below left] at (W) {$\mathscr{A}_4$};
	
    \draw[thick,purple] ($(-0.5,-9.5)+(1.7,0)$) arc (0:-90:1.7);
    \draw[thick,purple] ($(-0.5,-17.5)+(1.7,0)$) arc (0:90:1.7);
    \draw[thick,purple] ($(7.5,-9.5)-(1.7,0)$) arc (0:90:-1.7);
    \draw[thick,purple] ($(7.5,-17.5)-(1.7,0)$) arc (0:-90:-1.7);

\end{tikzpicture}
\end{center}
\caption{}
\end{subfigure}
\caption{Illustration of the functioning of the error-correcting code used in theorem \ref{thm:localize}. a) A directed graph that describes the causal connections between the authorized regions of a localize task. In this case the task involves four authorized regions. b) To complete the task, we employ an error-correcting code that associates a share to each edge in the corresponding undirected graph. The encoded qubit can be reconstructed from the shares associated with the edges attached to any one vertex, corresponding to the sets of edges crossed by the purple arcs. For a single logical qubit, the shares on each edge consist of two qubits. A detailed construction of the code can be found in \cite{hayden2016summoning}, and a more efficient version in \cite{wu2017efficient}. For infinite dimensional versions see \cite{hayden2016spacetime}.}
\label{fig:graphcode}
\end{center}
\end{figure}
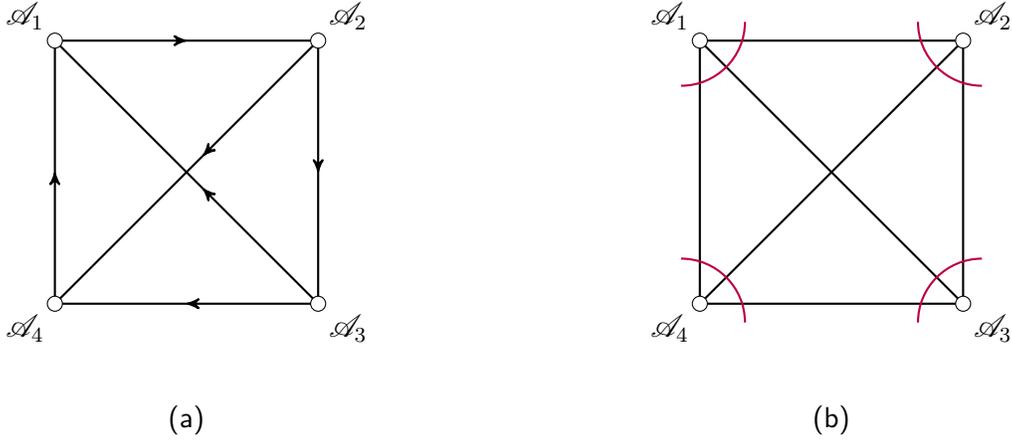

This result is particularly simple and expected from earlier work on summoning. Indeed, the conditions for summoning to a collection of diamonds are the same as for localizing to a collection of authorized regions (see \cite{hayden2016summoning}, or appendix \ref{appendix:equivalence}). 

\subsection{Localizing and excluding quantum information}

Now that we have an understanding of when and how a quantum system can be localized to many spacetime regions, we can approach the localize-exclude task. This task includes a notion of unauthorized region, a region in spacetime from which the system must be excluded in the sense described in the last section. Further, we will require that accessing an unauthorized region reveals no information about the quantum system. We collect these ideas into the following definition.
\begin{definition} A \textbf{localize-exclude task} involves two agencies, Alice and Bob, and is specified by a tuple $\{A,s,\{\mathscr{A}_1,...,\mathscr{A}_n\},\{\mathscr{U}_1,...,\mathscr{U}_m\}\}$, consisting of:
\begin{enumerate}
    \item A quantum system $A$. In general $A$ may be a subsystem of some overall pure state $\ket{\Psi}_{AR}$. The state on $AR$ is unknown to both Alice and Bob. 
    \item A start point $s$, at which Alice initially holds system $A$
    \item A collection of spacetime regions $\{\mathscr{A}_1,...,\mathscr{A}_n\}$, which we call the authorized regions
    \item A collection of spacetime regions $\{\mathscr{U}_1,...,\mathscr{U}_m\}$, which we call the unauthorized regions
\end{enumerate}
Bob will choose to access one of the $\mathscr{A}_i$ or $\mathscr{U}_i$, and will attempt to construct the quantum system $A$ from his access. Alice successfully completes the task if both (a) Bob is able to construct $A$ when he accesses any one of the $\mathscr{A}_i$ and (b) Bob learns no information about $A$ if he accesses any one of the $\mathscr{U}_i$.
\end{definition}
If Alice successfully completes the localize-exclude task, we say she has localized system $A$ to the corresponding authorized regions while excluding it from the unauthorized regions. 

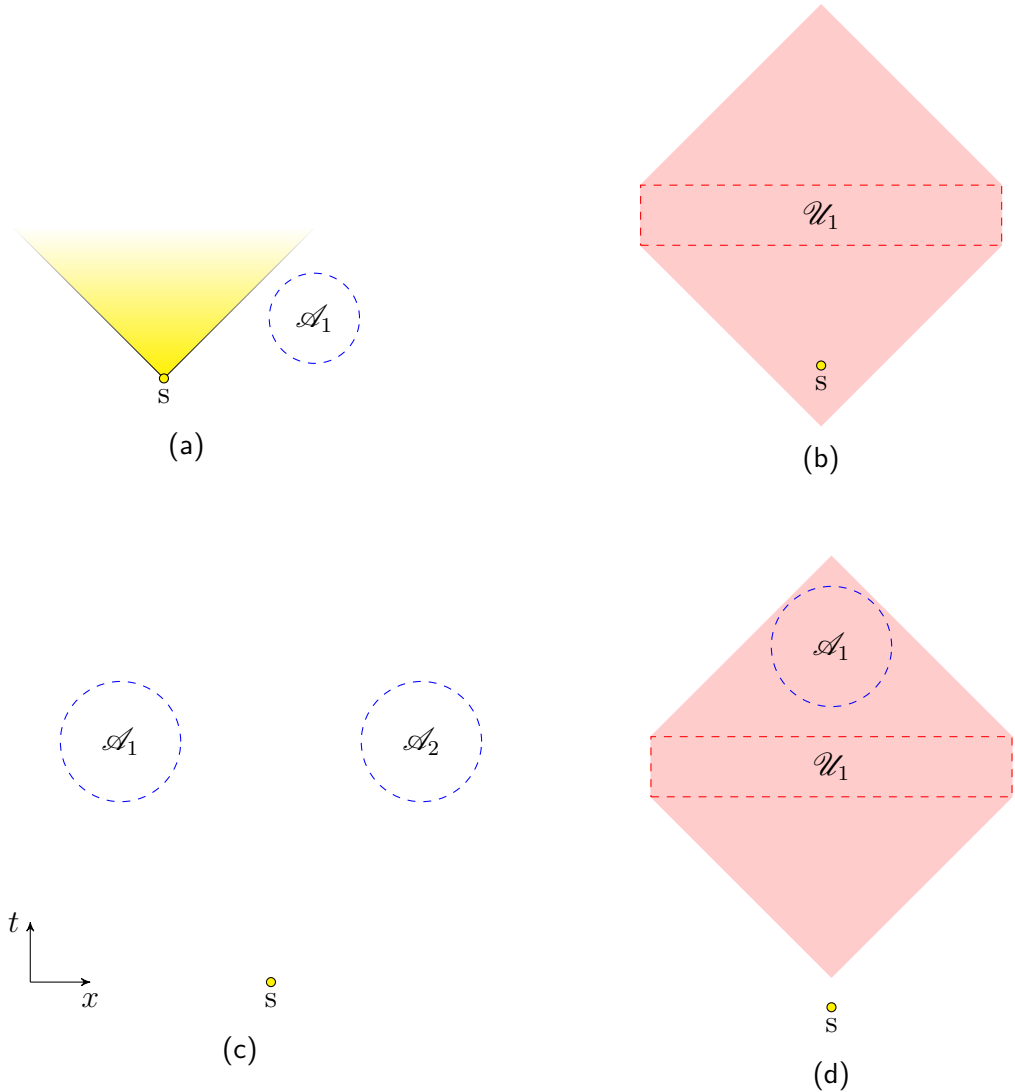
\begin{figure}
\centering
\begin{subfigure}{.45\textwidth}
\centering
\begin{tikzpicture}[scale=0.4]

\node at (0,13) {};
\filldraw[color = black, fill = yellow, path fading = north] (-5,5) -- (0,0) -- (5,5) -- (-5,5);

\node[below] at (0,0) {s};
\draw[fill=yellow] (0,0) circle (0.15cm);

\draw[dashed, blue] (5,2) circle (1.5cm);
\node at (5,2) {$\mathscr{A}_1$};
    
\end{tikzpicture}
\caption{}
\end{subfigure}
\hfill
\begin{subfigure}{.45\textwidth}
\centering
\begin{tikzpicture}[scale=0.4]

\begin{scope}[shift={($(0,-2)$)}]    
\draw[red!20!white, fill=red!20!white] (-6,8) -- (0,14) -- (6,8) -- (6,6) -- (0,0) -- (-6,6) -- (-6,8);
\end{scope}

\node at (0,13) {};

\node[below] at (0,0) {s};
\draw[fill=yellow] (0,0) circle (0.15cm);

\begin{scope}[shift={($(0,1)$)}]  
\draw[dashed,red] (-6,3) -- (6,3) -- (6,5) -- (-6,5) -- (-6,3);
\node at (0,4) {$\mathscr{U}_1$};
\end{scope}
    
\end{tikzpicture}
\caption{}
\end{subfigure}
\vspace{1cm}

\hfill
\begin{subfigure}{.45\textwidth}
  \centering
  \begin{tikzpicture}[scale=0.4]

\node at (0,13) {};

\draw[dashed, blue] (-5,8) circle (2cm);
\node at (-5,8) {$\mathscr{A}_1$};
\draw[dashed,blue] (5,8) circle (2cm);
\node at (5,8) {$\mathscr{A}_2$};

\node[below] at (0,0) {s};
\draw[fill=yellow] (0,0) circle (0.15cm);

\begin{scope}[shift={($(-5,0)$)}]       
    \draw[->] (-3,0) -- (-1,0);
    \node [below] at (-1,0) {$x$};
    \draw[->] (-3,0) -- (-3,2);
    \node [left] at (-3,2) {$t$};
\end{scope}

\end{tikzpicture}
\caption{}
\end{subfigure}
\hfill
\begin{subfigure}{.45\textwidth}
\centering
\begin{tikzpicture}[scale=0.4]

\draw[red!20!white, fill=red!20!white] (-6,8) -- (0,14) -- (6,8) -- (6,6) -- (0,0) -- (-6,6) -- (-6,8);

\node at (0,13) {};

\draw[dashed, blue] (0,11) circle (2cm);
\node at (0,11) {$\mathscr{A}_1$};

\draw[dashed,red] (-6,6) -- (6,6) -- (6,8) -- (-6,8) -- (-6,6);
\node at (0,7) {$\mathscr{U}_1$};

\node[below] at (0,-1) {s};
\draw[fill=yellow] (0,-1) circle (0.15cm);
    
\end{tikzpicture}
\caption{}
\end{subfigure}
\caption{Four impossible localize-exclude tasks: (a) An authorized region is entirely outside the future light cone of $s$, so system $A$ can't be localized there without violating the no-signalling principle. (b) The initial location of the quantum system is in the domain of dependence of an unauthorized region $\mathscr{U}_1$, so can be reconstructed from data in $\mathscr{U}_1$. (c) A quantum system cannot be localized to both the spacetime regions $\mathscr{A}_1$ and $\mathscr{A}_2$, due to the no-cloning theorem. (d) A quantum system cannot be localized to $\mathscr{A}_1$ without passing through the region $\mathscr{U}_1$, since there is no causal curve which passes through $\mathscr{A}_1$ and not $\mathscr{U}_1$. The red shaded region indicates the domain of dependence of the unauthorized region $\mathscr{U}_1$. The yellow shading indicates the future light cone of the start point.}
\label{fig:obvious}
\end{figure}

As an initial approach to understanding the localize-exclude task we can list off the most basic restrictions that we expect to apply. First, the two restrictions occurring in the context of the localize task are still relevant: the start point should have a point from each authorized region in its future light cone, and there should be no causally disjoint pairs of authorized regions. There are also additional restrictions relating to the unauthorized regions however. In particular, we can never have an authorized region $\mathscr{A}_i$ be contained in the domain of dependence of an unauthorized region $\mathscr{U}_j$, since then all information which enters $\mathscr{A}_i$ also enters $\mathscr{U}_j$. Finally, the start point too should not be contained in the domain of dependence of any unauthorized region. We illustrate each these conditions in figure \ref{fig:obvious}. Remarkably, a localize-exclude task $\{A,s,\{\mathscr{A}_1,...,\mathscr{A}_n\},\{\mathscr{U}_1,...,\mathscr{U}_m\}\}$ will turn out to be possible to complete so long as none of the four situations in figure \ref{fig:obvious} occur. 

\begin{figure}
  \centering
  \begin{tikzpicture}[scale=0.45]

\draw[dashed,blue] (5,15) circle (2cm);
\node at (5,15) {$\mathscr{A}_2$};

\draw[dashed,blue] (0,3) circle (2cm);
\node at (0,3) {$\mathscr{A}_1$};

\node[below right] at (0.5,-6) {$A$};
\draw[fill=yellow] (0.5,-6) circle (0.15cm);

\begin{scope}[shift={($(2,-2)$)}]   
\draw[dashed,red] (4,5) -- (6,7) -- (-4,17) -- (-6,15) -- (4,5);
\node[above left] at (0,11) {$\mathscr{U}$};
\end{scope}

\begin{scope}[shift={($(-5,-5)$)}]       
    \draw[->] (-4,0) -- (-2,0);
    \node [below] at (-2,0) {$x$};
    \draw[->] (-4,0) -- (-4,2);
    \node [left] at (-4,2) {$t$};
\end{scope}

\begin{scope}[shift={($(0,-5)$)}]  
\draw[black] (-1,1) -- (1,1) -- (1,3) -- (-1,3) -- (-1,1);
\node at (0,2) {$\mathcal{U}_k$};
\draw[postaction={on each segment={mid arrow}}] (0.5,-1) -- (0.5,1);
\end{scope}

\draw[double,postaction={on each segment={mid arrow}}] (9.5,-7) to [out=90,in=-70] (6.5,13.5);
\node[below] at (9.5,-7) {$k$}; 

\draw[double,postaction={on each segment={mid arrow}}] (-1,-7) to [out=90,in=-90] (-0.5,-4);
\node[below] at (-1,-7) {$k$};

\draw[double,postaction={on each segment={mid arrow}}] (-0.5,-2) to [out=90,in=-50] (-6,13);

\draw[postaction={on each segment={mid arrow}}] (0.5,-2) to  [out=90,in=-110] (4,13.4);

\end{tikzpicture}
\caption{Illustration of the protocol for completing a localize-exclude task with two authorized regions and one unauthorized region that satisfy the conditions of theorem \ref{thm:localexclude}. In the distant past, Alice prepares copies of the classical string $k$. She brings one copy of $k$ to each of $\mathscr{A}_1$ and $\mathscr{A}_2$ along a path which does not cross $\mathscr{U}$ --- this is always possible by condition (iii). She must also bring the classical string to the start point $s$, and encode the $A$ system using the quantum one-time pad \cite{ambainis2000private}. The overall state on $A$ and its purifying system $R$ is then of the form $(\mathcal{U}_k\otimes \mathcal{I} )\ket{\Psi}_{AR}$. The encoded system $A$ is sent through both authorized regions. By following this protocol both authorized regions contain $k$ and the encoded $A$ system, while the unauthorized region contains the encoded system only.}
\label{fig:onetimepad}
\end{figure}
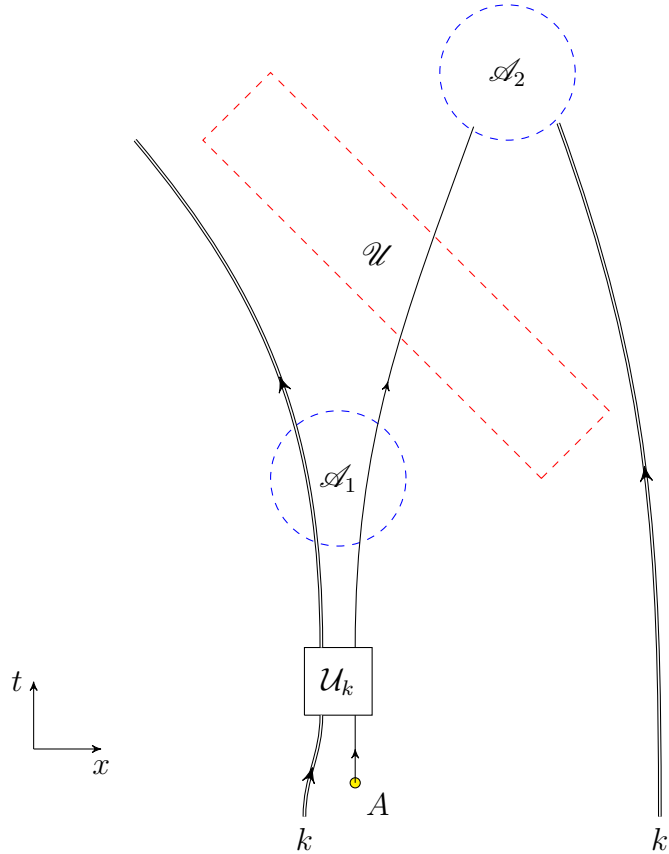

As a warm-up to the general case, consider the example given in the introduction as figure \ref{fig:firstexample}. There, a single unauthorized region blocks the path between two authorized ones. As we illustrate in figure \ref{fig:onetimepad}, it is nonetheless possible to complete the task using the quantum one-time pad \cite{ambainis2000private}. Near the start point, a unitary $\mathcal{U}_k$ is applied to $A$ with $k$ chosen at random. The overall pure state is then $\mathcal{U}_k\otimes \mathcal{I}\ket{\Psi}_{AR}$. To an observer who is unaware of the key $k$, the density matrix of the state is $\rho_{AR}=\sum_k\frac{1}{|k|}(\mathcal{U}_k\otimes \mathcal{I})\ketbra{\Psi}{\Psi}(\mathcal{U}^\dagger_k\otimes \mathcal{I})$. By carefully choosing the set of possible unitaries $\mathcal{U}_k$, one can arrange that $\rho_{AR} = \mathcal{I}_A/d_A\otimes \rho_R$, so that Bob has learned nothing about the $A$ system whenever he does not learn $k$. This is possible when $A$ consists of $n$ qubits and $k$ consists of $4n$ bits \cite{ambainis2000private}. Once encoded using the one-time pad, the $A$ system is sent through both authorized regions by allowing it to pass through the unauthorized region. An access to the unauthorized region then only sees the maximally mixed state. The classical key $k$ is also sent to both authorized regions, but along trajectories that avoid the unauthorized one.

A similar technique can be applied to the general case of many authorized and many unauthorized regions. As we show in the proof of theorem \ref{thm:localexclude} given below, the strategy is to first encode the $A$ system into an error-correcting code so that it can be localized to each authorized region. Then each share in that error-correcting code is encoded using a classical string and the quantum one-time pad. We then leverage classical secret sharing to allow us to get the encoding string to the needed authorized regions while avoiding all the unauthorized regions.

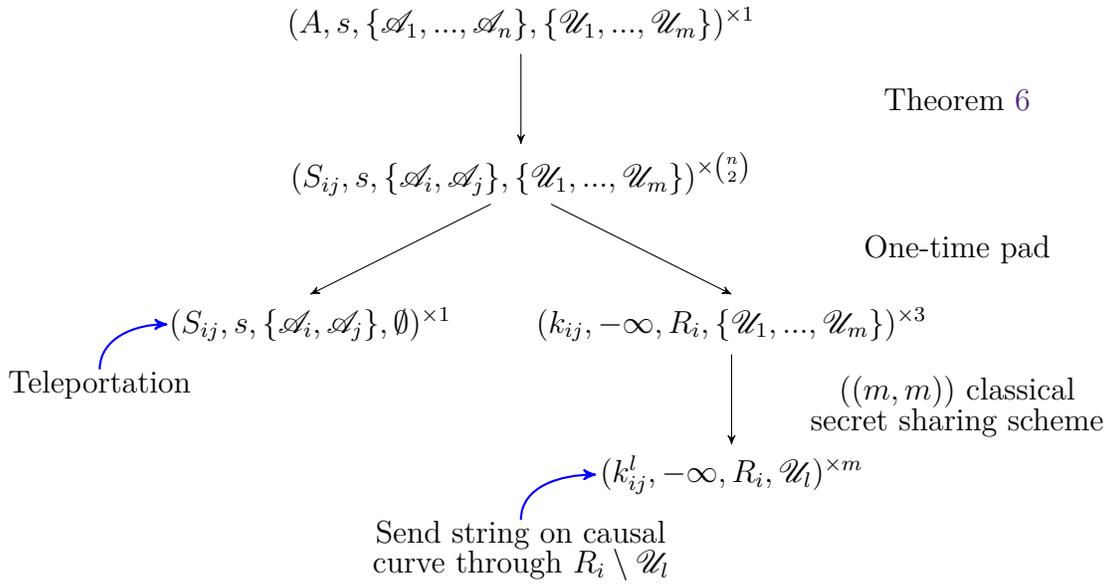
\begin{figure}
\begin{center}
\begin{tikzpicture}[scale=0.4]

\node at (0,15) {$(A,s,\{\mathscr{A}_1,...,\mathscr{A}_n\},\{\mathscr{U}_1,...,\mathscr{U}_m\})^{\times 1}$};

\draw[->] (0,14) -- (0,11);

\node at (0,10) {$(S_{ij},s,\{\mathscr{A}_i,\mathscr{A}_j\},\{\mathscr{U}_1,...,\mathscr{U}_m\})^{\times \binom{n}{2}}$};
\node at (-7,5) {$(S_{ij},s,\{\mathscr{A}_i,\mathscr{A}_j\},\emptyset)^{\times 1}$};

\node at (-14,3) {\textcolor{black}{Teleportation}};
\draw[thick,->,blue] (-14,3.5) to [out=90,in=180] (-11.75,5);

\node at (7,5) {$(k_{ij},-\infty,R_i,\{\mathscr{U}_1,...,\mathscr{U}_m\})^{\times 3}$};
\node at (7,0) {$(k_{ij}^l,-\infty,R_i,\mathscr{U}_l)^{\times m}$};

\node at (0,-2) {\textcolor{black}{Send string on causal}};
\node at (0,-3) {\textcolor{black}{curve through $R_i\setminus \mathscr{U}_l$}};
\draw[thick,->,blue] (0,-1.5) to [out=90,in=180] (2.5,0);

\draw[->] (-1,9) -- (-7,6);
\draw[->] (1,9) -- (7,6);
\draw[->] (7,4) -- (7,1);

\node[black] at (14.5,12.5) {Theorem \ref{thm:localize}};
\node[black] at (14.5,7.5) {One-time pad};
\node[black] at (14.5,2.75) {$((m,m))$ classical};
\node[black] at (14.5,1.75) {secret sharing scheme};

\end{tikzpicture}
\end{center}
\caption{Diagram of the sufficiency proof of theorem \ref{thm:localexclude}. In three steps, the proof reduces completing the localization task on the system $A$ with $n$ authorized sets and $m$ unauthorized sets, denoted by $(A,s,\{\mathscr{A}_1,...,\mathscr{A}_n\},\{\mathscr{U}_1,...,\mathscr{U}_m\})$, to completing $\binom{n}{2}$ instances of $(S_{ij},s,\{\mathscr{A}_i,\mathscr{A}_j\},\emptyset)$ on quantum shares, and $3 m \binom{n}{2}$ instances of $(k_{ij}^l,-\infty,R_i,\mathscr{U}_l)$ on classical shares, where the region $R_i$ may be either the start point or an authorized region. The notation $-\infty$ indicates the share is available at early times. The first step in the protocol is to recycle the error-correcting code from theorem \ref{thm:localize} to encode the $A$ system into shares $S_{ij}$. At the second step, the one-time pad is applied to each of the $S_{ij}$. This allows the unauthorized regions to be avoided by introducing additional classical shares, but without the need for further uses of quantum error-correcting codes.}
\label{fig:proofdiagram}
\end{figure}

\begin{figure}
\begin{center}
  \begin{tikzpicture}[scale=0.4]

\draw[red!20!white, fill=red!20!white] (-15,12) -- (-15,14) -- (-10,19) -- (-5,14) -- (-5,12) -- (-10,7) -- (-15,12);
\draw[dashed,red] (-15,12) -- (-5,12) -- (-5,14) -- (-15,14) -- (-15,12);
\node at (-10,13) {$\mathscr{U}_1$};

\draw[red!20!white, fill=red!20!white] (15,12) -- (15,14) -- (10,19) -- (5,14) -- (5,12) -- (10,7) -- (15,12);
\draw[dashed,red] (15,12) -- (5,12) -- (5,14) -- (15,14) -- (15,12);
\node at (10,13) {$\mathscr{U}_2$};

\draw[dashed,blue] (-10,10) circle (1.5cm);
\node at (-10,10) {$\mathscr{A}_1$};
\draw[dashed,blue] (-10,16) circle (1.5cm);
\node at (-10,16) {$\mathscr{A}_2$};

\draw[dashed,blue] (10,10) circle (1.5cm);
\node at (10,10) {$\mathscr{A}_2$};
\draw[dashed,blue] (10,16) circle (1.5cm);
\node at (10,16) {$\mathscr{A}_1$};

\draw[blue,postaction={on each segment={mid arrow}}] (0,-8) -- (0,-4);
\node[below] at (0,-8) {$A$};
\draw[fill=yellow] (0,-8) circle (0.15cm);

\begin{scope}[shift={($(-11,-15)$)}]       
    \draw[->] (-3,0) -- (-2,0);
    \node [right] at (-2,0) {$x$};
    \draw[->] (-3,0) -- (-3,1);
    \node [left] at (-3,1) {$t$};
\end{scope}

\begin{scope}[shift={($(0,-5)$)}]  
\draw[black] (-1,1) -- (1,1) -- (1,3) -- (-1,3) -- (-1,1);
\node at (0,2) {$\mathcal{U}_k$};
\end{scope}

\draw[double,postaction={on each segment={mid arrow}}] (11,-14) to [out=90,in=-90] (11,20);
\node[below] at (10.5,-14) {$k_2$}; 

\draw[double,postaction={on each segment={mid arrow}}] (10,-14) to [out=135,in=-80] (0.5,-4);

\draw[double,postaction={on each segment={mid arrow}}] (-11,-14) to [out=90,in=-90] (-11,20);
\node[below] at (-10.5,-14) {$k_1$}; 

\draw[double,postaction={on each segment={mid arrow}}] (-10,-14) to [out=45,in=-100] (-0.5,-4);

\draw[blue,postaction={on each segment={mid arrow}}] (-0.5,-2) to [out=100,in=-90] (-9,10);
\draw[blue,postaction={on each segment={mid arrow}}] (-9,10) to [out=90,in=-90] (-9,20);

\end{tikzpicture}
\end{center}
\caption{An example of a localize-exclude task and illustration of the protocol provided by theorem \ref{thm:localexclude} for its completion. Near the start point the system $A$ is encoded using the quantum one-time pad and sent (along the blue curve) through both authorized regions. The string $k$ satisfies $k=k_1\oplus k_2$, so that $k_1,k_2$ form the two shares of a $((2,2))$ secret sharing scheme. $k_1$ is sent through $\mathscr{A}_1$ and $\mathscr{A}_2$ while avoiding $\mathscr{U}_1$, while $k_2$ is sent through $\mathscr{A}_1$ and $\mathscr{A}_2$ while avoiding $\mathscr{U}_2$. Consequently, each $\mathscr{A}_i$ contains all of the classical shares $k_i$ along with the encoded $A$ system, while each $\mathscr{U}_i$ is missing one $k_i$.}
\label{fig:LEexample}
\end{figure}
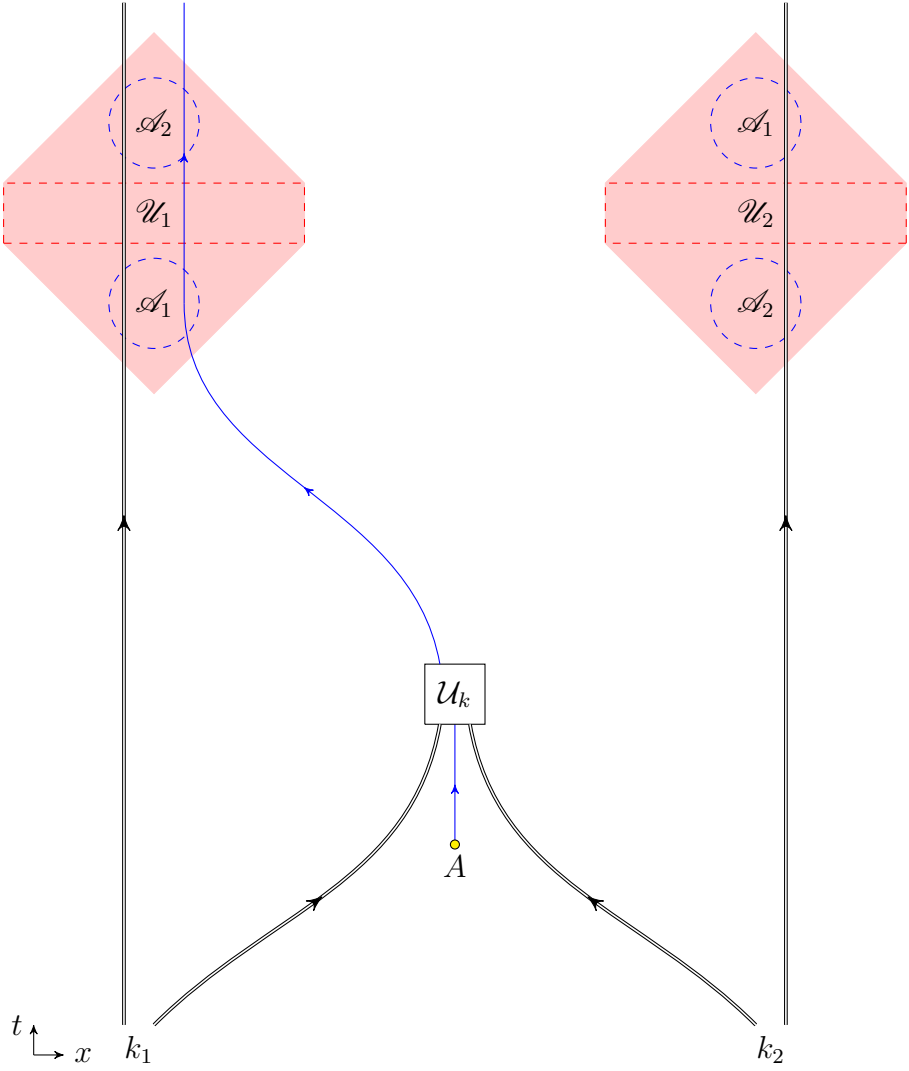

We are now ready to state theorem \ref{thm:localexclude} and give the proof. The proof of sufficiency is somewhat lengthy, so we have provided figure \ref{fig:proofdiagram} which summarizes the key steps taken.
\begin{theorem}\label{thm:localexclude}
Given a collection of authorized regions $\{\mathscr{A}_1,...,\mathscr{A}_n\}$, unauthorized regions $\{\mathscr{U}_1,...,\mathscr{U}_m\}$, and start point $s$, a localize-exclude task is possible if and only if the following three conditions are satisfied.
\begin{enumerate*}
\item The starting location of the system $A$ (a) has at least one point from each authorized region in its causal future, and (b) is not in the domain of dependence of any unauthorized region.
\item Every pair of authorized regions $(\mathscr{A}_i,\mathscr{A}_j)$ are causally connected.
\item For every pair $(\mathscr{A}_i,\mathscr{U}_j)$ of authorized and unauthorized regions, $\mathscr{A}_i$ is not contained in the domain of dependence of $\mathscr{U}_j$.
\end{enumerate*}
\end{theorem}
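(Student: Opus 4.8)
The plan is to prove the three conditions are necessary using the causality and domain-of-dependence arguments already developed, and then to establish sufficiency via the three-step reduction summarized in figure \ref{fig:proofdiagram}, combining the error-correcting code of theorem \ref{thm:localize} with the quantum one-time pad and a classical $((m,m))$ secret sharing scheme. For necessity, conditions (i)(a) and (ii) are precisely the conditions of theorem \ref{thm:localize} and are necessary for the same reasons: localizing $A$ to a region outside the causal future of $s$ would be superluminal signalling, and localizing $A$ to two causally disjoint regions would let Bob produce two copies of $A$, violating no-cloning. For (i)(b), if $s \in D(\mathscr{U}_j)$ then the data available in $\mathscr{U}_j$ determines everything in its domain of dependence, in particular the $A$ system sitting at $s$; hence Bob could reconstruct $A$ from $\mathscr{U}_j$ and the system could not be excluded there. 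Condition (iii) is necessary by the same reasoning applied to an authorized region: if $\mathscr{A}_i \subseteq D(\mathscr{U}_j)$ then any reconstruction of $A$ available in $\mathscr{A}_i$ is also available from $\mathscr{U}_j$, so $A$ cannot simultaneously be localized to $\mathscr{A}_i$ and excluded from $\mathscr{U}_j$. These are the four obstructions depicted in figure \ref{fig:obvious}.

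For sufficiency I would first reuse the construction of theorem \ref{thm:localize}: by conditions (i)(a) and (ii), encode $A$ into the graph error-correcting code whose shares $S_{ij}$ are indexed by the edges of the causal-connection graph, so that the shares incident to any vertex $\mathscr{A}_i$ reconstruct $A$, and localize each $S_{ij}$ to the pair $\{\mathscr{A}_i,\mathscr{A}_j\}$ using the teleportation protocol of theorem \ref{thm:simplelocalize}. Next, apply an independent quantum one-time pad $\mathcal{U}_{k_{ij}}$ to each share $S_{ij}$, so that without the key $k_{ij}$ the padded share is exactly maximally mixed and carries no information about $A$; the padded shares may then be routed freely, including straight through unauthorized regions. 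The remaining task is purely classical: deliver each key $k_{ij}$ to the start point, where the pad is applied, and to the two authorized regions $\mathscr{A}_i,\mathscr{A}_j$, where it is decoded (these three destinations account for the factor of $3$ in figure \ref{fig:proofdiagram}), while keeping it hidden from every unauthorized region. To do this, split $k_{ij}$ with an $((m,m))$ scheme, say $k_{ij} = k_{ij}^1 \oplus \cdots \oplus k_{ij}^m$, and route share $k_{ij}^l$ so as to avoid $\mathscr{U}_l$.

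The geometric heart of the argument is the routing of these classical shares. Conditions (i)(b) and (iii) guarantee that each destination region $R$ (the start point, or an authorized $\mathscr{A}_i$) contains a point $p$ with $p \notin D(\mathscr{U}_l)$, hence a causal curve through $p$ that never enters $\mathscr{U}_l$; sending $k_{ij}^l$ along the past portion of this curve delivers it to $R$. The key observation is that a causal curve avoiding $\mathscr{U}_l$ entirely must also avoid $D(\mathscr{U}_l)$ entirely, since by definition every causal curve through a point of $D(\mathscr{U}_l)$ enters $\mathscr{U}_l$; thus the worldline carrying $k_{ij}^l$ stays outside $D(\mathscr{U}_l)$ and $k_{ij}^l$ cannot be recovered from $\mathscr{U}_l$. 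Consequently every authorized region receives all $m$ shares of each key and decodes every $S_{ij}$, recovering $A$, while every $\mathscr{U}_l$ is missing the share $k_{ij}^l$ of every key and so, by the $((m,m))$ property, learns nothing about any $k_{ij}$; the padded shares it sees are then a product of maximally mixed states uncorrelated with $A$.

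I expect the main obstacle to be the exclusion analysis rather than the localization: one must verify that the three primitives compose correctly, so that Bob's total view upon accessing any single $\mathscr{U}_l$ is provably independent of $A$. The cleanest way to organize this is to treat the classical delivery as a family of elementary classical localize-exclude subtasks $(k_{ij}^l,-\infty,R,\mathscr{U}_l)$, prove the geometric routing lemma once, and then invoke the security of the one-time pad together with the $((m,m))$ scheme to conclude that missing even a single key share per quantum share already decorrelates Bob's data from $A$.
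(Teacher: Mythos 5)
Your proposal is correct and follows essentially the same route as the paper's own proof: necessity via the no-signalling, no-cloning, and domain-of-dependence arguments of figure \ref{fig:obvious}, and sufficiency via the identical three-step reduction (graph error-correcting code from theorem \ref{thm:localize}, quantum one-time pad on each share $S_{ij}$, then $((m,m))$ classical secret sharing with each key share $k_{ij}^l$ routed along a causal curve through $R$ avoiding $\mathscr{U}_l$, whose existence is exactly what conditions (i)(b) and (iii) provide). Your added observation that a curve avoiding $\mathscr{U}_l$ automatically avoids $D(\mathscr{U}_l)$, and your remark that the composition of the three primitives should be checked for exclusion, are sensible refinements but do not change the argument.
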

\begin{proof}\,The necessity of conditions $(i)(a)$ and $(ii)$ follow from the same arguments as in theorem \ref{thm:localize}. To argue the necessity of condition $(iii)$, notice that if $\mathscr{A}_i$ is contained in the domain of dependence of $\mathscr{U}_j$, then the state of the quantum fields within $\mathscr{A}_j$ is determined by unitary evolution from the fields within $\mathscr{U}_i$. Then whenever the $A$ system can be determined from $\mathscr{A}_i$ it is also possible to recover it from $\mathscr{U}_j$. Condition $(i)(b)$ is necessary for the same reason.

To demonstrate sufficiency we construct an explicit protocol to complete the task in the case where all three conditions are true. It is useful to recall the notation $(A,s,\{\mathscr{A}_1,...,\mathscr{A}_n\},\{\mathscr{U}_1,...,\mathscr{U}_m\})$, which describes a localize-exclude task by specifying the system on which we must complete the task, the start point, authorized regions, and unauthorized regions. As a first step in constructing our protocol, we encode the system $A$ into the error-correcting code used in theorem \ref{thm:localize}. Using this code and localizing each share in the code to its two associated authorized regions would localize the system to each authorized region. However, here we also need to exclude the system from all of the unauthorized regions. To do this, we will localize each share $S_{ij}$ to $\mathscr{A}_i$ and $\mathscr{A}_j$ while also avoiding every unauthorized region. In other words, encoding $A$ into the codeword stabilized code reduces completing the original task to completing the tasks $(S_{ij},s,\{\mathscr{A}_i,\mathscr{A}_j\},\{\mathscr{U}_1,...,\mathscr{U}_m\})$ for every share $S_{ij}$.

By using the quantum one-time pad and classical secret sharing it is possible to further reduce completing the $(S_{ij},s,\{\mathscr{A}_i,\mathscr{A}_j\},\{\mathscr{U}_1,...,\mathscr{U}_m\})$ task. In particular, at $s$ use the quantum one-time pad to encode the share $S_{ij}$ using some classical string $k_{ij}$. We may freely send the encoded share through $\mathscr{A}_i$ and $\mathscr{A}_j$ so long as the classical string $k_{ij}$ is kept out of all of the unauthorized regions, and is made available at $s, \mathscr{A}_i$, and $\mathscr{A}_j$. Thus, the task $(S_{ij},s,\{\mathscr{A}_i,\mathscr{A}_j\},\{\mathscr{U}_1,...,\mathscr{U}_m\})$ is equivalent to completing $(S_{ij},s,\{\mathscr{A}_i,\mathscr{A}_j\},\emptyset)$ along with $(k_{ij},-\infty, \{s,\mathscr{A}_i,\mathscr{A}_j\},\{\mathscr{U}_1,...,\mathscr{U}_m\} )$\footnote{We've introduced the notation $-\infty$ to indicate the start point is located in the distant past. This is the appropriate task to consider completing on the classical system $k_{ij}$ as Alice may prepare these strings at some early time.}. 

To finish the protocol, we first notice that theorem \ref{thm:localize} shows that we can complete any task of the form $(S_{ij},s,\{\mathscr{A}_i,\mathscr{A}_j\},\emptyset)$ given that conditions $(i)(a)$ and $(ii)$ hold. The task $(k_{ij},-\infty, \{s,\mathscr{A}_i,\mathscr{A}_j\},\{\mathscr{U}_1,...,\mathscr{U}_m\} )$ is also easily handled. Note that since the task is to be completed on a classical string, we can produce three copies of $k_{ij}$ and worry separately about sending the string to $s$ and each of $\mathscr{A}_i$ and $\mathscr{A}_j$, so we have to complete three instances of $(k_{ij},-\infty, R,\{\mathscr{U}_1,...,\mathscr{U}_m\} )$, where $R$ can be $s, \mathscr{A}_i$ or $\mathscr{A}_j$. To complete these, encode $k_{ij}$ into an $((m,m))$ secret sharing scheme\footnote{A $((k,n))$ secret sharing scheme is one where any $k$ of the $n$ total shares can be used to reconstruct the secret while any $k-1$ shares reveal nothing about the secret. A $((m,m))$ scheme is the appropriate one here because we want every share to be needed to reconstruct $k_{ij}$.} with shares $k_{ij}^l$. Then complete the tasks $(k_{ij}^l,-\infty, R, \mathscr{U}_l )$. This completes the task with all $m$ unauthorized regions since the classical string is kept out of $\mathscr{U}_l$ so long as at least one of the shares in the $((m,m))$ scheme is. 

It remains to complete the tasks of the form $(k_{ij}^l,-\infty, R,\mathscr{U}_l)$. When $R$ is one of the authorized sets, condition (iii) guarantees that $R$ is not in the domain of dependence of $\mathscr{U}_l$, which means there is a causal curve passing through $R$ which does not enter $\mathscr{U}_l$. To complete the task, simply send $k_{ij}^l$ along this curve. When $R$ is the start point $s$, condition $(i)(b)$ guarantees there is a causal curve passing through $s$ and not $\mathscr{U}_l$, so again we can complete this task. 
\end{proof} 

An example of the protocol used in this proof is given as figure \ref{fig:LEexample}.

Earlier we mentioned the similarity of conditions $(ii)$ and $(iii)$ to corresponding conditions for quantum secret sharing. A quantum secret sharing scheme \cite{gottesman2000theory} is specified by an access structure, with the access structure consisting of subsets of parties deemed authorized and subsets deemed unauthorized. A quantum secret sharing scheme can be constructed under two conditions \cite{gottesman2000theory}: (a) (no-cloning) no two authorized sets can be disjoint and (b) (monotonicity) no authorized set can be contained within an unauthorized set. Conditions $(ii)$ and $(iii)$ of the localize-exclude theorem are exactly these conditions rephrased in a context appropriate to spacetime. 

\begin{figure}
    \centering
    \begin{tikzpicture}
    
    \draw (-5,3) circle (1);
    \node at (-5,3) {$\Sigma_1$};
    
    \draw (0,3) circle (1);
    \node at (0,3) {$\Sigma_2$};
    
    \draw (5,3) circle (1);
    \node at (5,3) {$\Sigma_3$};
    
    \node[below] at (0,-2) {s};
    \draw[fill=yellow] (0,-2) circle (0.1);
    
    \end{tikzpicture}
    \caption{Example of the embedding of a secret sharing scheme with arbitrary access structure into a localize-exclude task. We consider a secret sharing scheme that involves three parties, and has authorized sets $S_1=\{1,2\}$, $S_2=\{2,3\}$ and $S_3=\{1,2,3\}$, with all other subsets of parties deemed unauthorized. In the corresponding localize-exclude task, the three parties become three causally disjoint spacetime regions $\Sigma_1, \Sigma_2$ and $\Sigma_3$. Further, this localize-exclude task has authorized regions $\mathscr{A}_1 = \Sigma_1\cup\Sigma_2$, $\mathscr{A}_2=\Sigma_2\cup\Sigma_3$ and $\mathscr{A}_3=\Sigma_1\cup \Sigma_2\cup \Sigma_3$. The start point $s$ has been placed at an early enough time that all the $\Sigma_i$ are in its future light cone.}
    \label{fig:secretsharingreduction}
\end{figure}
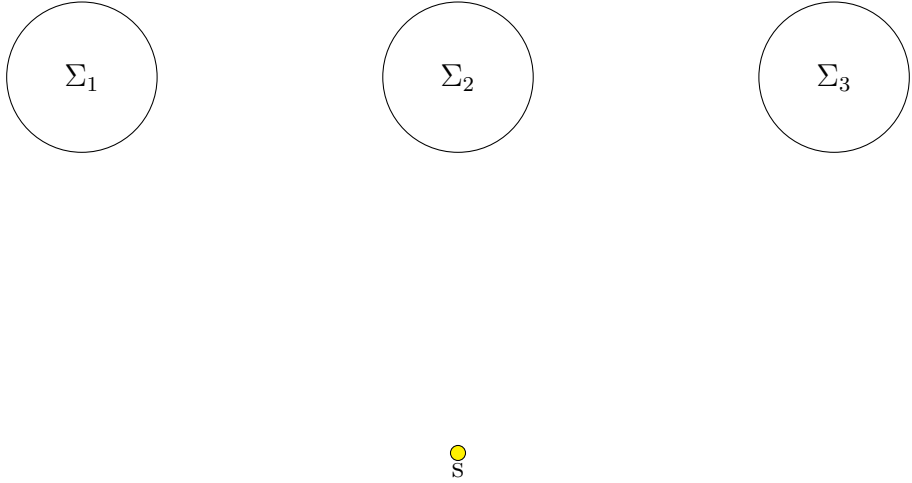

Beyond this similarity, we can embed any secret sharing scheme into a localize-exclude task. Consider $n$ parties, Bob$_1$,...,Bob$_n$, who each can potentially access an associated spacetime region $\Sigma_i$. Take the authorized and unauthorized regions to consist of unions of $\Sigma_i$'s so that a full authorized region $\mathscr{A}_i$ can be accessed only if some collection of Bobs agree to cooperate. Choose the regions $\Sigma_i$ to be all causally disjoint. In this setting two authorized regions being causally connected occurs if and only if they share a $\Sigma_i$. Then condition $(ii)$ of theorem \ref{thm:localexclude}, which requires causal connections between authorized regions, reduces to the requirement that every pair of authorized regions share at least one $\Sigma_i$. This is exactly the no-cloning requirement on secret sharing. Further, condition $(iii)$ reduces to no $\mathscr{U}_i = \Sigma_{i1}\cup...\cup\Sigma_{in}$ containing as a subset some $\mathscr{A}_j=\Sigma_{j1}\cup...\cup\Sigma_{j2}$ under the same restriction of having causally disjoint $\Sigma_i$. This is just the monotonicity condition on quantum secret sharing schemes. Finally, to embed our quantum secret sharing task into a localize-exclude task we should ensure that condition $(i)$ becomes trivial, which we can do by sending the start point $s$ to an early time. We illustrate the embedding of a secret sharing task into a localize-exclude task in figure \ref{fig:secretsharingreduction}.

Theorem \ref{thm:localexclude} shows that completing a localize-exclude task with unauthorized regions requires only the same quantum error-correcting code as used in the case with no unauthorized regions. Hiding the system from the unauthorized regions can be accomplished using only the quantum one-time pad and classical secret sharing. This is similar to the approach taken in \cite{javelle2012new}, where quantum error-correcting codes are combined with the quantum one-time pad to yield quantum secret sharing schemes. By using the efficient error-correcting code underlying our protocol however, we arrive at a particularly efficient construction of quantum secret sharing schemes. In particular we find that there is a universal quantum error-correcting code with $2\binom{n}{2}$ shares for $n$ the number of authorized sets which, along with uses of the one-time pad and classical secret sharing, constructs quantum secret sharing schemes with arbitrary access structures. Using Shamir's method~\cite{shamir1979share} to construct the classical secret sharing schemes, the $3m\binom{n}{2}$ instances of the $((m,m))$ classical scheme will each require $O(m \log m)$ bits, where $m$ was the number of unauthorized sets. In total, $O(n^2)$ qubits and $O(m^2n^2 \log m)$ classical bits are used in the localize-exclude construction. This provides the first construction of quantum secret sharing schemes using a number of qubits polynomial in the number of authorized sets. Previously, efficient constructions were known for threshold schemes and certain other special access structures. (See, \emph{e.g.}~\cite{markham2008graph,sarvepalli2010matroids,javelle2012new}.) Since the number of unauthorized sets can grow exponentially with $n$, the classical bits used can be exponentially large. This is to be expected since it is conjectured to be impossible to construct classical secret sharing schemes for arbitrary access structures without consuming exponential resources~\cite{beimel2011secret}. 

\section{State-assembly}\label{sec:assembly}

\subsection{State-assembly with authorized regions}

In the localize-exclude task Bob can access any one of a set of spacetime regions. Alice, who holds various quantum systems within those regions, is helpless to prevent Bob's access. In an alternative scenario we can have Bob request information from Alice. Alice is free to comply with the request or to reject it, and hand over no information. Certain sets of requests are deemed authorized, others unauthorized. Sets of requests corresponding to authorized sets should result in Alice handing over sufficient information for the system to be reconstructed; requests to unauthorized sets should reveal no information about the system. Considering such scenario's leads us to construct the \emph{state-assembly task}. 

Before giving a precise definition of the task we introduce a few constructions. To specify locations where Bob may request the system we designate certain spacetime points as call points $c_i$. At each call point a bit $b_i\in \{0,1\}$ is revealed to Alice. To each call point there corresponds a return point $r_i$. Together, a call point and the corresponding reveal point define a causal diamond.
\begin{definition}\label{def:diamonds} 
The \textbf{causal diamond} $D_i$ is defined as the intersection of the points in the past light cone of $r_i$ with those in the future light cone of $c_i$.
\end{definition}
If $b_i=1$ we say the diamond $D_i$ has been called to. The causal diamond represents the spacetime region in which it is possible to both know that a call was received, and to use this information to influence what is handed over at the corresponding return point.

We can now define the state-assembly task.
\begin{definition}\label{def:stateassembly}
A \textbf{state-assembly} task involves two agencies, Alice and Bob, and is specified by a tuple $\{A,s,\{\mathscr{A}_1,...,\mathscr{A}_n\},\{\mathscr{U}_1,...,\mathscr{U}_m\}\}$, consisting of
\begin{enumerate}
    \item A quantum system $A$. In general $A$ may be a subsystem of some overall pure state $\ket{\Psi}_{AR}$. The state on $AR$ is known to Alice and unknown to Bob.
    \item A start point $s$ at which Alice initially holds $A$.
    \item A collection of authorized sets of diamonds $\{\mathscr{A}_1,...,\mathscr{A}_n\}$. Each authorized set consists of a collection of diamonds, $\mathscr{A}_i = \{D_{1i},...,D_{ki}\}$.
    \item A collection of unauthorized sets of diamonds $\{\mathscr{U}_1,...,\mathscr{U}_m\}$. Each authorized set consists of a collection of diamonds, $\mathscr{U}_i = \{D_{1i},...,D_{ki}\}$.
\end{enumerate}
Alice will receive calls at a subset of the $D_i$. If the set of called to diamonds corresponds to an authorized set, Alice should return quantum systems and classical instructions sufficient to reconstruct $A$ at the associated reveal points $r_i$. If the set of calls corresponds to an unauthorized set, the systems she hands over should reveal no information about $A$. Further, no set of calls should result in Alice returning systems sufficient to construct two copies of the system.
\end{definition}
There are a few points to clarify regarding this definition. First, Alice need not hand the system over at any one of the called to diamonds. Instead the systems she hands over at the called to diamonds should together be sufficient to recover the $A$ system. Second, calls to sets of diamonds not specified as authorized or unauthorized may result in the system being handed over --- Alice still completes the task successfully so long as she does not hand over two copies of the system.

In state-assembly Alice knows the state $\ket{\Psi}_{AR}$ and can potentially prepare many copies of the $A$ system. This differs from the localize-exclude task and earlier work on summoning. However, we have also required that she never hand over more than one copy of $A$. As discussed in more detail in appendix \ref{appendix:equivalence}, this is actually leads to conditions on state-assembly that are equivalent to having Alice hold an unknown state. We have chosen to discuss this task from the perspective of a known quantum state however as it is more natural in the context of the application given in section \ref{sec:partyindependenttransfer}.

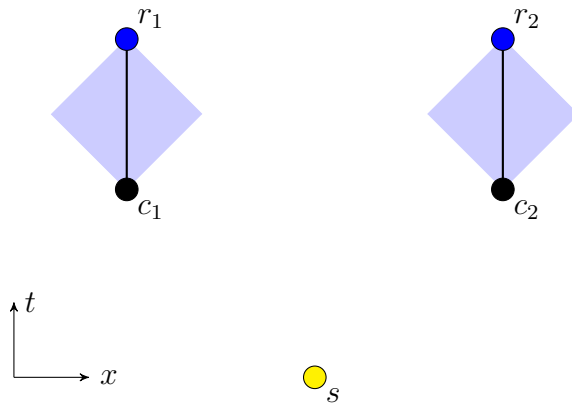
\begin{figure}
\begin{center}
\begin{tikzpicture}[scale=1]

\coordinate (P) at (0cm, 0cm);

\coordinate (C2) at (2.5cm, 2.5cm);
\coordinate (Q2) at (2.5cm, 4.5cm);
\coordinate (L2) at (3.5cm,3.5cm);
\coordinate (R2) at (1.5cm,3.5cm);

\coordinate (C1) at (-2.5cm,2.5cm);
\coordinate (Q1) at (-2.5cm,4.5cm);
\coordinate (L1) at (-3.5cm,3.5cm);
\coordinate (R1) at (-1.5cm,3.5cm);

\draw[blue!20!white, fill=blue!20!white] (C2) -- (L2) -- (Q2) -- (R2);
\draw[blue!20!white, fill=blue!20!white] (C1) -- (L1) -- (Q1) -- (R1);

\draw[thick] (C1) -- (Q1); 	 
\draw[thick] (C2) -- (Q2); 

\draw[fill=yellow] (P) circle (0.15cm);
\node[below right] at (P) {$s$};

\draw[fill=black] (C1) circle (0.15cm);
\node [below right] at (C1) {$c_1$};
\draw[fill=black] (C2) circle (0.15cm);
\node [below right] at (C2) {$c_2$};
	
\draw[fill=blue] (Q1) circle (0.15cm);
\node [above right] at (Q1) {$r_1$};
\draw[fill=blue] (Q2) circle (0.15cm);
\node [above right] at (Q2) {$r_2$};

\begin{scope}[shift={(-1,0)}]       
    \draw[->] (-3,0) -- (-2,0);
    \node [right] at (-2,0) {$x$};
    \draw[->] (-3,0) -- (-3,1);
    \node [right] at (-3,1) {$t$};
\end{scope}

\end{tikzpicture}
\end{center}
\caption{A state-assembly task with two call-return pairs. A call to $c_1$ is required to result in the system returned at $r_1$, and likewise for $c_2$ and $r_2$ (indicated by the black lines), while a call to both shouldn't result in more than one copy of the system being turned over. This task is impossible as shown by theorem \ref{thm:noassembly}, because $r_2$ is outside the future light cone of $c_1$ and $r_1$ is outside the future light cone of $c_2$. In the language of definitions \ref{def:diamonds} and \ref{def:stateassembly}, $c_1$ and $r_1$ form a causal diamond $D_1$ (shown in blue), and the authorized set $\mathscr{A}_1$ consists of the single diamond $D_1$ (similarly for $c_2$ and $r_2$).}\label{fig:noassembly}
\end{figure}

Before discussing more general constructions we begin with the simplest state-assembly task, illustrated in figure \ref{fig:noassembly}, and prove a no-assembly theorem. In this scenario there are just two authorized sets $\mathscr{A}_1$ and $\mathscr{A}_2$. 
\begin{theorem}\label{thm:noassembly}
Consider a state-assembly task with authorized sets $\mathscr{A}_1$ and $\mathscr{A}_2$ which are causally disconnected. Then this assembly task is impossible to complete with a perfect success rate.
\end{theorem}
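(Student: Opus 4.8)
The plan is to derive a contradiction with no-cloning by converting the geometric hypothesis of causal disconnection into an operational no-signalling statement. First I would record the relevant geometric fact: the call point $c_2$ lies in its own diamond $D_2$, and the return point $r_1$ lies in $D_1$. Indeed, $c_2$ is trivially in the future light cone of $c_2$ and, for a non-degenerate diamond, in the past light cone of $r_2$, so $c_2\in D_2$; symmetrically $r_1\in D_1$. Since the authorized sets here are the single diamonds $\mathscr{A}_1=\{D_1\}$ and $\mathscr{A}_2=\{D_2\}$, causal disconnection means there is no causal curve joining any point of $D_1$ to any point of $D_2$. In particular there is no causal curve from $c_2$ to $r_1$, and none from $c_1$ to $r_2$.

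The central step is then a no-signalling observation, and this is where I expect the main obstacle to lie. Because $c_2$ lies outside the causal past of $r_1$, the bit $b_2$ revealed at $c_2$ cannot influence anything occurring at or before $r_1$; consequently the quantum and classical systems Alice hands over at $r_1$ are produced by an operation that is independent of whether $c_2$ was called. Symmetrically, whatever Alice returns at $r_2$ is independent of $b_1$. The delicate point is precisely this conversion of the purely geometric disconnection hypothesis into the independence of Alice's two responses: it rests on the relativistic prohibition of superluminal signalling, applied to the spacelike-separated regions in which the two call bits are revealed, together with the observation above that $c_2$ and $r_1$ (resp. $c_1$ and $r_2$) are unlinked by any causal curve.

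With independence established, the contradiction is immediate. Consider the run in which calls are made to both $c_1$ and $c_2$. By the independence just argued, the systems Alice delivers at $r_1$ are identical to those she would deliver if only $c_1$ were called; but in that run the set of called diamonds is exactly the authorized set $\mathscr{A}_1$, so perfect success forces the delivery at $r_1$ to reconstruct $A$. The same reasoning at $r_2$, using the authorized set $\mathscr{A}_2$, shows a perfect protocol must also reconstruct $A$ at $r_2$. Hence in the both-called run Alice returns, at the two distinct points $r_1$ and $r_2$, systems each sufficient to reconstruct $A$. Handing a referee who holds the purifying system $R$ two systems that each form the state $\ket{\Psi}_{AR}$ with $R$ is forbidden by no-cloning (equivalently, by monogamy of entanglement), and it directly violates the requirement in Definition \ref{def:stateassembly} that no set of calls yield two copies. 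Therefore no protocol achieves a perfect success rate, which is the claim.
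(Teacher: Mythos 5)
Your proof is correct and follows essentially the same route as the paper's: in the both-called run, causal disconnection prevents Alice's response at each authorized set from depending on the other set's call, so both must return enough to reconstruct $A$, producing two copies and contradicting the task's requirements. The only cosmetic difference is that you specialize to single-diamond authorized sets $\mathscr{A}_1=\{D_1\}$, $\mathscr{A}_2=\{D_2\}$, whereas the theorem allows multi-diamond sets; your independence argument extends verbatim since every call point of one set is causally disconnected from every return point of the other.
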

\begin{proof}\,For Alice to successfully complete the assembly task, she must have a protocol which
\begin{enumerate*}
    \item Returns sufficient information to construct the system when $\mathscr{A}_1$ or $\mathscr{A}_2$ receive calls.
    \item Hand over information sufficient to construct at most one copy of the system for any set of calls.  
\end{enumerate*}
We can straightforwardly show that any protocol which satisfies the first requirement cannot satisfy the second, and consequently there is no such successful protocol. Indeed, suppose both $\mathscr{A}_1$ and $\mathscr{A}_2$ receive calls. Then since $\mathscr{A}_1$ and $\mathscr{A}_2$ are causally disjoint Alice's agents at the diamonds in $\mathscr{A}_1$ cannot distinguish this situation from one where only $\mathscr{A}_1$ has been called to. By (i) then they hand in sufficient information to construct the system. Similarly, Alice's agents at $\mathscr{A}_2$ will also hand in sufficient information to construct the system. Since Bob may now construct two copies of $A$, (ii) is violated. 
\end{proof}

We see that completing the assembly task to causally separated regions is impossible. Notice that it is essential that the Bobs may give calls to both diamonds: the possibility of a call to $\mathscr{A}_1 \cup \mathscr{A}_2$ along with the requirement that Alice allow assembly of not more than one copy of the system leads to Alice being unable to complete the task successfully.

Next, we look at a wider class of assembly tasks involving an arbitrary number of authorized sets $\{\mathscr{A}_i\}$.
\begin{theorem}\label{thm:authorizedassembly}
An assembly task with authorized sets $\{\mathscr{A}_1,...,\mathscr{A}_n\}$ and start point $s$ can be completed with a perfect success rate if and only if the following conditions hold.
\begin{enumerate}
\item The return point of at least one diamond from each authorized set is in the causal future of the start point.
\item Every pair of authorized sets $(\mathscr{A}_i,\mathscr{A}_j)$ are causally connected.
\end{enumerate}
\end{theorem}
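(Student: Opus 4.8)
The plan is to prove necessity by recycling the two impossibility arguments already established, and to prove sufficiency by mirroring the graph-plus-code construction of theorem \ref{thm:localize}, replacing its two-region localization primitive with a call-sensitive two-set \emph{assembly} primitive. The characterization is identical to the localize case, so the real work is checking that the request-return structure and the extra no-two-copies constraint do not spoil the reduction.

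For necessity, condition (ii) is exactly theorem \ref{thm:noassembly} applied pairwise: if some pair $(\mathscr{A}_i,\mathscr{A}_j)$ were causally disconnected, then restricting attention to the call pattern that activates only these two sets, the agents at $\mathscr{A}_i$ cannot tell whether $\mathscr{A}_j$ was also called, so each set independently discharges its reconstruction obligation and two copies of $A$ result, contradicting the no-two-copies requirement. Condition (i) follows from no-signaling: the correlations of $A$ with the referee's purifying system $R$ originate at $s$, so a returned system can carry information about $A$ only if its reveal point lies in the causal future of $s$. Hence if every reveal point of some $\mathscr{A}_i$ were outside the causal future of $s$, a call to $\mathscr{A}_i$ could never be answered with reconstruction-sufficient data, and so at least one reveal point per authorized set must lie in the causal future of $s$.

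For sufficiency I reuse the directed graph $G$ and the erasure-correcting code of theorem \ref{thm:localize}: one vertex per authorized set, an edge for every causally connected pair (so condition (ii) makes $G$ complete), one code-share per edge, and the shares on the edges incident to any vertex forming a decodable set. It then suffices to solve, for each edge, a two-set assembly task on that edge's share $S_{ij}$: return $S_{ij}$ at $\mathscr{A}_i$ when $\mathscr{A}_i$ is the called set and at $\mathscr{A}_j$ when $\mathscr{A}_j$ is the called set. I handle this primitive with a teleportation protocol adapted from figure \ref{fig:generaltworegion}: pre-distribute an entangled pair $E\bar{E}$, teleport $S_{ij}$ onto $\bar{E}$ near $s$, broadcast the classical correction to a reveal point of each set in the causal future of $s$ (guaranteed by (i)), and route the single physical system $\bar{E}$ along the causal curve witnessing the connection of $\mathscr{A}_i$ and $\mathscr{A}_j$, returning it at the earlier set's reveal point if that set is called and otherwise forwarding it to the later set. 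Because only the exactly-authorized call patterns require reconstruction, when the called set is precisely $\mathscr{A}_i$ its neighbours are silent, so every incident edge-system is forwarded to and returned within $\mathscr{A}_i$, reassembling $A$.

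It remains to verify the global no-two-copies constraint and, I expect, this together with the edge primitive is where the difficulty concentrates. Each edge-share is a single physical quantum system returned at most once, so the totality of returned quantum data is a subset of the code-shares with multiplicity one; producing two copies of $A$ would require two \emph{disjoint} decodable subsets to be handed over, which is impossible because the complement of any decodable set of a pure-state code is non-decodable (the complementarity underlying no-cloning), and equivalently because any two vertex neighbourhoods in the complete graph $G$ share the edge between them. The main obstacle is the edge primitive itself: unlike the localize case, where the teleported system may pass through both regions since Bob accesses only one, here the same single system must be \emph{steered} to one set on the basis of the call pattern, so the delicate point is the causal bookkeeping showing that the information of whether a set was called reaches the routing decision in time while the classical correction still reaches the chosen reveal point.
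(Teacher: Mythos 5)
Your proposal is correct and follows essentially the same route as the paper: necessity via no-signalling and the pairwise application of the no-assembly argument, and sufficiency by running the graph-based error-correcting-code/teleportation protocol of the localize theorems and handing over whatever data sits in a diamond when it is called, with no-cloning (equivalently, Alice's use of a single encoded copy of $A$) ruling out double returns. The paper states the sufficiency step more tersely — "execute the localization protocol and hand over the classical and quantum data in $\mathscr{A}_i$ when called" — whereas you unpack the same edge-level routing and complementarity argument explicitly, but there is no substantive difference.
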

\begin{proof}\,The first condition is necessary by no-signalling. The necessity of the second condition follows from the same argument as given in theorem \ref{thm:noassembly}. 

We can use theorem \ref{thm:localexclude} to show sufficiency of these conditions. There, we constructed an explicit protocol that localizes the system to each authorized region. In particular, the system is recorded as classical teleportation data and shares in a quantum error-correcting code. To complete the assembly task then, Alice should execute the localization protocol from theorem  \ref{thm:localexclude}, with the authorized sets of diamonds considered as authorized regions. Then to complete the assembly task Alice need only hand over the classical and quantum data in $\mathscr{A}_i$ when she receives calls there. 

Notice that this protocol automatically ensures Bob cannot give calls to receive two copies of the system, since Alice only uses one copy of $A$. 
\end{proof}

\subsection{State-assembly with authorized and unauthorized regions}

\begin{figure}
\centering
\begin{subfigure}{.5\textwidth}
  \centering
  \begin{tikzpicture}[scale=0.6]

\coordinate (P) at (0cm, 0cm);

\coordinate (C2) at (2.5cm, 2.5cm);
\coordinate (Q2) at (2.5cm, 4.5cm);
\coordinate (L2) at (3.5cm,3.5cm);
\coordinate (R2) at (1.5cm,3.5cm);

\coordinate (C1) at (-2.5cm,2.5cm);
\coordinate (Q1) at (-2.5cm,4.5cm);
\coordinate (L1) at (-3.5cm,3.5cm);
\coordinate (R1) at (-1.5cm,3.5cm);

\draw[thick] (C1) -- (Q1); 	 
\draw[thick] (C2) -- (Q2); 

\draw[fill=black] (C1) circle (0.15cm);
\node [below right] at (C1) {$c_1$};
\draw[fill=black] (C2) circle (0.15cm);
\node [below right] at (C2) {$c_2$};
	
\draw[fill=blue] (Q1) circle (0.15cm);
\node [above right] at (Q1) {$r_1$};
\draw[fill=blue] (Q2) circle (0.15cm);
\node [above right] at (Q2) {$r_2$};

\begin{scope}[shift={($(-2,0)$)}]       
    \draw[->] (-3,0) -- (-2,0);
    \node [right] at (-2,0) {$x$};
    \draw[->] (-3,0) -- (-3,1);
    \node [left] at (-3,1) {$t$};
\end{scope}

\draw[dashed,blue] (-4,1) -- (4,1) -- (4,6) -- (-4,6) -- (-4,1);
\draw[dashed,red] (3.5,1.5) -- (3.5,5.5) -- (1.5,5.5) -- (1.5,1.5) -- (3.5,1.5);

\end{tikzpicture}
\caption{}
\end{subfigure}%
\hfill
\begin{subfigure}{.5\textwidth}
  \centering
  \begin{tikzpicture}[scale=0.60]

\coordinate (P) at (0cm, 0cm);

\coordinate (C2) at (2.5cm, 2.5cm);
\coordinate (Q2) at (2.5cm, 4.5cm);

\coordinate (C1) at (-2.5cm,-1.5cm);
\coordinate (Q1) at (-2.5cm,0.5cm);

\draw[thick] (C1) -- (Q1); 	 
\draw[thick] (C2) -- (Q2); 

\draw[fill=black] (C1) circle (0.15cm);
\node [below right] at (C1) {$c_1$};
\draw[fill=black] (C2) circle (0.15cm);
\node [below right] at (C2) {$c_2$};
	
\draw[fill=blue] (Q1) circle (0.15cm);
\node [above right] at (Q1) {$r_1$};
\draw[fill=blue] (Q2) circle (0.15cm);
\node [above right] at (Q2) {$r_2$};

\draw[dashed,red] (-4,-2.5) -- (4,-2.5) -- (4,6) -- (-4,6) -- (-4,-2.5);
\draw[dashed,blue] (3.5,1.5) -- (3.5,5.5) -- (1.5,5.5) -- (1.5,1.5) -- (3.5,1.5);

\draw[red][-triangle 45] (C1) -- (Q2);
    
\end{tikzpicture}
\caption{}
\end{subfigure}
\caption{Illustration of condition (\ref{cond:AUpairingcondition}) in theorem \ref{thm:authorizedunauthorizedassembly}. Dashed red boxes enclose unauthorized sets while dashed blue boxes enclose authorized sets. The condition states that every pairing $(\mathscr{A}_a,\mathscr{U}_i)$ of authorized with unauthorized set must have either $(a)$ $\mathscr{A}_a \setminus \mathscr{U}_i \neq \emptyset$ or $(b)$ $\mathscr{U}_i \setminus \mathscr{A}_a$ is causally connected to $\mathscr{A}_a$.}
\label{fig:condition3}
\end{figure}
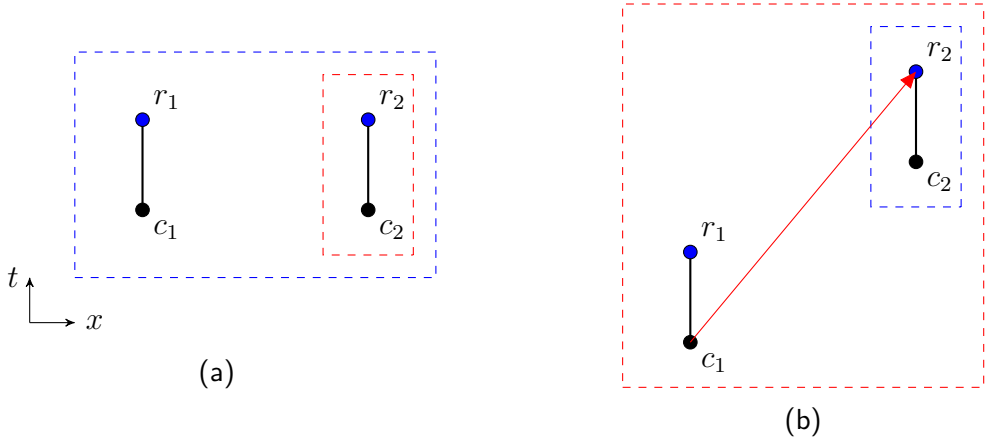

We can now proceed to characterize the state-assembly tasks with both authorized and unauthorized sets that can be completed by Alice. The difficulty here for Alice is different than in the case of localize-exclude. In localize-exclude, she had to keep the system out of a region $\mathscr{U}_i$ from an attacker who might gain full access to $\mathscr{U}_i$. Now, Alice's labs are secure. However the sets of spacetime points corresponding to an authorized call can be overlapping with those corresponding to an unauthorized call. This means that locally she may not be able to tell an authorized and unauthorized call apart. 

To understand under what conditions Alice can avoid an accidental reveal of the system to an unauthorized set of diamonds, we can first consider a task of the form $(A,s,\mathscr{A},\mathscr{U})$ having one authorized and one unauthorized set of diamonds. In this case, Alice can be successful if either (a) there is a diamond in $\mathscr{A}$ which is not in $\mathscr{U}$, since then she can turn over the system at that diamond only when there is a call there or (b) there is a diamond $D_*$ in $\mathscr{A}$ which, although it is in $\mathscr{U}$, is positioned such that Alice can tell at $D_*$ whether the global set of calls is authorized or unauthorized. In particular, this occurs exactly when there is a diamond in $\mathscr{U}\setminus \mathscr{A}$ which is causally connected to $\mathscr{A}$. Figure \ref{fig:condition3} illustrates these two possibilities. 

We now state and prove the theorem characterizing the state-assembly tasks with authorized and unauthorized sets of diamonds.
\begin{theorem}\label{thm:authorizedunauthorizedassembly}
A state-assembly task with authorized sets $\{\mathscr{A}_a\}$ and unauthorized sets $\{\mathscr{U}_i\}$ and start point $s$ can be completed if and only if the following three conditions hold:
\begin{enumerate}
\item The return point of at least one diamond from each authorized set is in the causal future of the start point.
\item Each pair of authorized sets $(\mathscr{A}_a,\mathscr{A}_b)$ is causally connected.
\item \label{cond:AUpairingcondition} Each pair $(\mathscr{A}_a, \mathscr{U}_i)$ of authorized with unauthorized sets has the property that either $\mathscr{A}_a \setminus \mathscr{U}_i \neq \emptyset$ or $\mathscr{U}_i \setminus \mathscr{A}_a$ is causally connected to $\mathscr{A}_a$.
\end{enumerate}
\end{theorem}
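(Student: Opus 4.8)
The plan is to prove necessity by causality and no-cloning, reusing the arguments behind Theorems~\ref{thm:noassembly} and~\ref{thm:authorizedassembly}, and then to prove sufficiency by layering a conditional-revelation mechanism on top of the localization protocol already built for Theorem~\ref{thm:authorizedassembly}. Conditions (i) and (ii) are inherited directly: if no return point of some $\mathscr{A}_a$ lies in the causal future of $s$, then producing reconstruction data there is superluminal signalling, and if a pair $(\mathscr{A}_a,\mathscr{A}_b)$ were causally disconnected, the indistinguishability argument of Theorem~\ref{thm:noassembly} would let Bob collect two copies of $A$.

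The genuinely new content is the necessity of (iii), which I would establish by contraposition. Suppose some pair $(\mathscr{A}_a,\mathscr{U}_i)$ fails both alternatives, so that $\mathscr{A}_a\subseteq\mathscr{U}_i$ while $\mathscr{U}_i\setminus\mathscr{A}_a$ is causally disconnected from $\mathscr{A}_a$. The call point of any diamond $D_\star\in\mathscr{U}_i\setminus\mathscr{A}_a$ and the return point of any $D_*\in\mathscr{A}_a$ lie in their own diamonds, so a causal curve from the former to the latter would witness a causal connection between $\mathscr{U}_i\setminus\mathscr{A}_a$ and $\mathscr{A}_a$; disconnection rules this out, and hence no call to a diamond of $\mathscr{U}_i\setminus\mathscr{A}_a$ can influence any return point of $\mathscr{A}_a$. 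Alice's agents acting inside the diamonds of $\mathscr{A}_a$ must then return identical systems whether the global call pattern is $\mathscr{A}_a$ or $\mathscr{U}_i$. Since $\mathscr{A}_a$ is authorized those returns reconstruct $A$, and since $\mathscr{A}_a\subseteq\mathscr{U}_i$ they are all handed over when $\mathscr{U}_i$ is called; thus $\mathscr{U}_i$ would reconstruct $A$, contradicting exclusion.

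For sufficiency I would recycle the construction of Theorem~\ref{thm:localexclude}. Conditions (i) and (ii) already let Alice localize $A$ to every authorized set using the erasure-correcting code of Theorem~\ref{thm:localize}, each quantum share wrapped in an independent quantum one-time pad and surrendered at a diamond whenever it is called; what remains is to deliver the pad keys to every authorized set while denying them to every unauthorized one. In Theorem~\ref{thm:localexclude} this was done by routing classical shares around the unauthorized regions; here, where the obstruction is overlap rather than geometry, Alice instead returns the keys conditionally. I would secret-share each pad key into $m$ pieces with an $((m,m))$ scheme and, for the piece associated with $\mathscr{U}_i$, attach a gate drawn from condition (iii) of the pair $(\mathscr{A}_a,\mathscr{U}_i)$: in case $(a)$ release that piece only at a diamond $D_*\in\mathscr{A}_a\setminus\mathscr{U}_i$, which is never called under a pure $\mathscr{U}_i$ call; in case $(b)$ release it at a diamond of $\mathscr{A}_a$ but suppress the release whenever a call to the causally connected diamond $D_\star\in\mathscr{U}_i\setminus\mathscr{A}_a$ is detected. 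When exactly $\mathscr{A}_a$ is called no gate fires, every key is assembled, and reconstruction succeeds; when $\mathscr{U}_i$ is called, every authorized reconstruction available inside $\mathscr{U}_i$ is missing a pad key, so the shares it can access remain maximally mixed and reveal nothing about $A$.

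The main obstacle is the case-$(b)$ gate, where the causal structure enters in an essential way. To suppress the release inside $\mathscr{A}_a$ on a call to $D_\star$, the call event at $D_\star$ must lie in the causal past of a return point of some $D_*\in\mathscr{A}_a$, so that Alice can both register the spurious call and still act on what she hands back. I would check that the causal connection asserted in (iii) actually delivers this orientation --- a causal curve running from $D_\star$ into $D_*$ rather than the reverse, which would merely let $\mathscr{A}_a$ signal outward without aiding detection --- and confirm that a point in the common causal future of the two call points lies in the past of $r_*$. A secondary obstacle is global consistency: I must ensure that a gate installed to block one unauthorized set never strips a key that some genuinely authorized set needs, which I would secure by keeping the one-time-pad keys and their secret-shared pieces independent across the $\binom{n}{2}$ code shares and across the $m$ unauthorized sets, exactly as in the bookkeeping of Theorem~\ref{thm:localexclude}.
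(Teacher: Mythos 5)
Your proposal reproduces the paper's proof essentially step for step: necessity of (i) and (ii) inherited from Theorems~\ref{thm:noassembly} and~\ref{thm:authorizedassembly}, necessity of (iii) by the indistinguishability argument when $\mathscr{A}_a\subseteq\mathscr{U}_i$ and $\mathscr{U}_i\setminus\mathscr{A}_a$ is causally disconnected from $\mathscr{A}_a$, and sufficiency by the same three-layer reduction: the graph-based erasure code, an independent quantum one-time pad on each share $S_{ij}$, an $((m,m))$ classical secret sharing of each pad key, and a conditional release of the piece $k_{ij}^l$ keyed to the two alternatives in condition~(iii).

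The one place you depart from the paper is the obstacle you flag in the case-$(b)$ gate, and you are right to flag it --- but you leave it as something ``to be checked,'' and in fact it cannot be discharged from the hypotheses as written. Definition~\ref{def:connected} is symmetric, so condition~(iii) only guarantees a causal curve in one direction or the other between $\mathscr{U}_i\setminus\mathscr{A}_a$ and $\mathscr{A}_a$. Your protocol (and the paper's, which at this step instructs Alice to send $k$ to ``any diamond $D_*$ in $\mathscr{A}_a$ which has at least one call point of $\mathscr{U}_i\setminus\mathscr{A}_a$ in its causal past,'' tacitly assuming such a diamond exists) needs the directed version: a call point of $\mathscr{U}_i\setminus\mathscr{A}_a$ in the causal past of a return point of $\mathscr{A}_a$. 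Causal connection of the diamond regions does not imply this. Concretely, take $\mathscr{A}_1=\{D_*\}$ authorized and $\mathscr{U}_1=\{D_*,D_\star\}$ unauthorized with $D_\star$ entirely in the far causal future of $r_*$: then $\mathscr{A}_1\setminus\mathscr{U}_1=\emptyset$ but $\{D_\star\}$ is causally connected to $\{D_*\}$, so conditions (i)--(iii) all hold, yet Alice must hand over reconstruction data at $r_*$ on a call to $D_*$ without being able to learn whether $c_\star$ will also be called, so the unauthorized call $\{D_*,D_\star\}$ receives the system. So the step you propose to verify would actually fail under the literal reading of (iii); the theorem needs (iii) read directionally (as the paper's surrounding discussion clearly intends), and your write-up, like the paper's, does not close this.
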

\begin{proof}\,The necessity of conditions (i) and (ii) follow from the same arguments as in theorem \ref{thm:authorizedassembly}. To see the necessity of condition (iii), consider that its negation is that both $\mathscr{A}_a \subset \mathscr{U}_i$ and $\mathscr{U}_i \setminus \mathscr{A}_a$ is not causally connected to $\mathscr{A}_a$. Then Alice's agents in the diamonds of $\mathscr{A}_a$, should they receive calls, cannot distinguish a call to $\mathscr{A}_a$ from a call to $\mathscr{U}_i$ since they are causally disconnected from diamonds in $\mathscr{U}_i\setminus \mathscr{A}_a$. In order to complete the task, Alice must always hand the system over to $\mathscr{A}_a$ when she receives a call there. She will then also always hand over the system when the call is to $\mathscr{U}_i$, leading to her failing the task. 

To demonstrate sufficiency we construct an explicit protocol to complete the task in the case where all three conditions are true. Using the error-correcting code constructed from the graph of causal connections (also used in theorems \ref{thm:localize}, \ref{thm:localexclude}, and \ref{thm:authorizedassembly}) we can reduce the initial $(A,s,\{\mathscr{A}_1,...,\mathscr{A}_n\},\{\mathscr{U}_1,...,\mathscr{U}_m\})$ task to many tasks of the form $(S_{ij},s,\{\mathscr{A}_i,\mathscr{A}_j\},\{\mathscr{U}_1,...,\mathscr{U}_m\})$, where the $S_{ij}$ are the shares of the error-correcting code associated to the $i-j$ pair of regions.

To complete the $(S_{ij},s,\{\mathscr{A}_i,\mathscr{A}_j\},\{\mathscr{U}_1,...,\mathscr{U}_m\})$ tasks, we encode the share $S_{ij}$ using the quantum one-time pad with some classical randomness $k_{ij}$. Now notice that we can complete the $(S_{ij},s,\{\mathscr{A}_i,\mathscr{A}_j\},\{\mathscr{U}_1,...,\mathscr{U}_m\})$ task by completing $(S_{ij},s,\{\mathscr{A}_i,\mathscr{A}_j\},\emptyset)$ on the quantum share $S_{ij}$ and $(k_{ij},s,\mathscr{A}_i,\{\mathscr{U}_1,...,\mathscr{U}_m\})$ and \\ $(k_{ij},s,\mathscr{A}_j,\{\mathscr{U}_1,...,\mathscr{U}_m\})$  on the classical string. Stated another way, the use of the one-time pad lets us ignore avoiding the unauthorized sets when considering the quantum data, and only worry about not handing the classical string over at the unauthorized sets. 

To complete the tasks of the form $(k_{ij},s,\mathscr{A}_i,\{\mathscr{U}_1,...,\mathscr{U}_m\})$, we encode $k_{ij}$ into a $((m,m))$ classical secret sharing scheme with shares labelled $k_{ij}^l$. Then completing the tasks $(k_{ij}^l,s,\mathscr{A}_i,\mathscr{U}_l)$ ensures each share $k_{ij}^l$ ends up at $\mathscr{A}_i$, so the string $k_{ij}$ can be constructed there along with the quantum share $S_{ij}$. At the same time, completing the tasks $(k_{ij}^l,s,\mathscr{A}_i,\mathscr{U}_l)$ ensures the share $k_{ij}^l$ is missing from $\mathscr{U}_l$, and since every share $k_{ij}^l$ is needed to recover $S_{ij}$, $S_{ij}$ cannot be decoded there. 

To complete these $(k,s,\mathscr{A}_i,\mathscr{U}_j)$ tasks, recall that by condition (iii) either $\mathscr{A}_i \setminus \mathscr{U}_j$ is not empty or $\mathscr{U}_i \setminus \mathscr{A}_i$ is causally connected to $\mathscr{A}_i$. If $\mathscr{A}_i \setminus \mathscr{U}_i$ is not empty, then send $k$ to a diamond $D_*$ in $\mathscr{A}_i \setminus \mathscr{U}_i$. Then hand over $k$ at $D_*$ if there is a call there. If $\mathscr{U}_i \setminus \mathscr{A}_i$ is causally connected to $\mathscr{A}_i$ then send $k$ to any diamond $D_*$ in $\mathscr{A}_i$ which has at least one call point of $\mathscr{U}_i \setminus \mathscr{A}_i$ in its causal past. Then hand over $k$ at $D_*$ if there is a call there and no call at the diamonds in $\mathscr{U}_j\setminus \mathscr{A}_i$.
\end{proof}

We give a task on four diamonds in figure \ref{fig:4diamondexample} and demonstrate how to complete it using the protocol constructed in this proof. 

\begin{figure}
\begin{center}
\begin{tikzpicture}[scale=1]

\draw[ultra thick] (0,0,0) -- (0,3,0);
\draw[fill=black] (0,0,0) circle (0.15);
\draw[fill=blue] (0,3,0) circle (0.15);

\draw[ultra thick] (0,0,3) -- (0,3,3);
\draw[fill=black] (0,0,3) circle (0.15);
\draw[fill=blue] (0,3,3) circle (0.15);

\draw[-triangle 45][red] (0,0,0) -- (0,3,3);
\draw[-triangle 45][red] (0,0,3) -- (0,3,0);

\node [below right] at (0,0,0) {$c_{a,1}$};
\node [below right] at (0,0,3) {$c_{a,2}$};

\node [above right] at (0,3,0) {$r_{a,1}$};
\node [above right] at (0,3,3) {$r_{a,2}$};

\draw[ultra thick] (5,0,0) -- (5,3,0);
\draw[fill=black] (5,0,0) circle (0.15);
\draw[fill=blue] (5,3,0) circle (0.15);

\draw[ultra thick] (5,0,3) -- (5,3,3);
\draw[fill=black] (5,0,3) circle (0.15);
\draw[fill=blue] (5,3,3) circle (0.15);

\draw[-triangle 45][red] (5,0,0) -- (5,3,3);
\draw[-triangle 45][red] (5,0,3) -- (5,3,0);

\node [below right] at (5,0,0) {$c_{b,1}$};
\node [below right] at (5,0,3) {$c_{b,2}$};

\node [above right] at (5,3,0) {$r_{b,1}$};
\node [above right] at (5,3,3) {$r_{b,2}$};

\begin{scope}[shift={($(0,-1,0)$)}]       
    \draw[->] (-3,0,1) -- (-3,0,-0.2);
    \node [right] at (-3,0,-0.2) {$y$};
    \draw[->] (-3,0,1) -- (-2,0,1);
    \node [right] at (-2,0,1) {$x$};
    \draw[->] (-3,0,1) -- (-3,1,1);
    \node [right] at (-3,1,1) {$t$};
\end{scope}
	
\end{tikzpicture}
\end{center}
\caption{An arrangement of causal diamonds on which we can define an assembly task. Define authorized sets $\mathscr{A}_1=\{D_{a,1},D_{b,1}\}$ and $\mathscr{A}_2=\{D_{a,2},D_{b,2}\}$ while any set of three or four diamonds is deemed unauthorized. One can check that every unauthorized set has $\mathscr{U}_i\setminus \mathscr{A}_j$ causally connected to $\mathscr{A}_j$, so theorem \ref{thm:authorizedunauthorizedassembly} gives that this task can be completed. To do so, the initial system $A$ is encoded using the quantum one-time pad and sent towards the pair of diamonds labelled `$a$', $D_{a1}$ and $D_{a2}$. It should be handed over at whichever diamond receives a call. The key $k$ from the one-time pad is stored in a $((2,2))$ secret sharing scheme as $k=k_a\oplus k_b$ and $k_a$ and $k_b$ are sent towards the `$a$' and `$b$' pairs of diamonds respectively. At the `$a$' pair of diamonds, $k_a$ is returned to $D_{a1}$ if there is a call there and no call at $D_{a2}$, or at $D_{a2}$ if there is a call there and no call at $D_{a1}$. The $k_b$ string is returned to $D_{b1}$ or $D_{b2}$ using the same logic. If three or four diamonds receive calls, then at least one of the `$a$' or `$b$' pairs of diamonds will not receive a share of the $((2,2))$ scheme, so Bob will not receive the $A$ system. Notice that the task is possible even though $\mathscr{A}_1,\mathscr{A}_2$ are subsets of the unauthorized sets, violating the monotonicity requirement of quantum secret sharing.}
\label{fig:4diamondexample}
\end{figure}
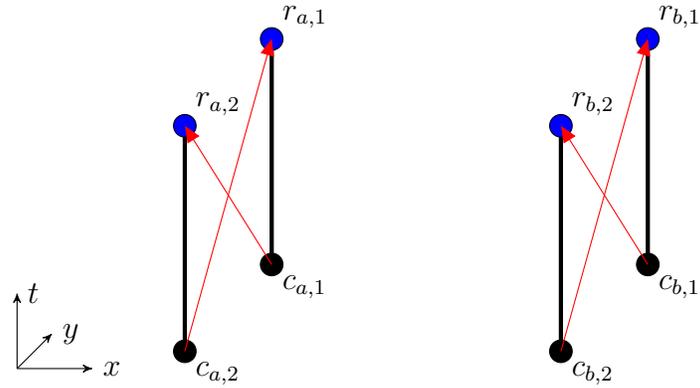

The state-assembly task seems a less natural extension of quantum secret sharing to spacetime, since condition (iii) differs notably from the corresponding condition in secret sharing. In particular, some allowed state-assembly tasks have unauthorized sets which contain authorized ones, violating the monotonicity requirement of quantum secret sharing~\cite{gottesman2000theory}. In contrast, the localize-exclude task mimics the monotonicity requirement closely, since the condition there is that (the domain of dependence of) the unauthorized region not contain the authorized region. However, this distinction from secret sharing opens up interesting new possibilities; in the next section we propose a cryptographic task and protocol which exploits the failure of monotonicity in the state-assembly task. 

\section{An application: party-independent transfer}\label{sec:partyindependenttransfer}

As discussed in the introduction, relativistic tasks in Minkowski space have provided an interesting set of tools for the cryptographer. In part, our motivation for considering the state-assembly task with authorized and unauthorized regions is in the hope it will find such application. The state-assembly task includes scenarios with many parties, and allows for a rich array of possible causal structures. Each causal structure translates to a set of restrictions on which parties can know what, and when, and it seems plausible that these restrictions can be exploited to perform some interesting multiparty task or computation securely.

We suspect there are many possible directions to consider, and make a small start at this by suggesting below one particular task. We do not offer complete security arguments for our proposal or careful discussion of the practicality of this task. Our aim is simply to suggest the applicability of the state-assembly task to cryptography.

To motivate the task consider the following scenario. Alice is an employer who wishes to hire either Bob$_1$ or Bob$_2$. Alice is known to be inclined to prejudice, and the Bobs wish to ensure they are paid based on the work done alone, without regard to their identity. An easy solution would be to announce publicly the position's salary, but unfortunately the Bobs are private people. They wish to keep their salaries secret while also having a guarantee of fairness. We define the \emph{party-independent transfer} task in order to satisfy these two competing needs. 

In the party-independent transfer task, we specify that each Bob will give an input $X_i$ to Alice. Alice will then output quantum systems $S(X_1,X_2,a)$ and $T(X_1,X_2,a)$, with one system handed to each of the Bobs, where $a$ is a variable fixed by Alice. The task occurs in a spacetime setting, so in general the $X_i$ may be stored as several bits handed over from Bob's agents to Alice's agents at distributed spacetime points. The $X_i$ should be distinct. If not, then the protocol aborts.

To meet the needs of our jealousy-prone but private Bobs, and guard against the prejudiced Alice, we need the transfer to have the following properties:
\begin{enumerate*}
\item \emph{Party independence:} The output systems $S$ and $T$ produced by Alice have the property that 
\begin{eqnarray} \label{eq:indep}
S(X_1,X_2,a) = T(X_2,X_1,a) \,\,\,\,\, and \,\,\,\,\, S(X_2,X_1,a) = T(X_1,X_2,a).
\end{eqnarray}
In words, we require that the output given to Bob$_1$ would have been given to Bob$_2$ had the Bobs reversed their inputs. 
\item \emph{Fixed:} As a set, $\{S(X_1,X_2,a),T(X_1,X_2,a)\}$ is determined by the variable $a$ only. In words, the Bobs' input influences who receives which system only, not which two systems are handed over.
\item \emph{Secret:} Each Bob does not learn Alice's output to the other Bob. In particular, this requires that Alice not satisfy condition 1 trivially by having $S(X_1,X_2,a)=T(X_1,X_2,a)$ always.
\end{enumerate*}

To assure ourselves completing this task is not trivial consider various naive approaches. We might have Alice share two entangled sets of degrees of freedom, $E_1$ given to Bob$_1$ and $E_2$ given to Bob$_2$, onto which she will later teleport $T$ and $S$, respectively. The Bobs could then exchange degrees of freedom if they decide to reverse the arrangement of who receives which system. This is certainly party-independent, since Alice performs the teleportation without knowing who holds which degrees of freedom. However, the fixed property is violated, as either Bob can act on their degrees of freedom before exchanging it. 

Another strategy would be to have Alice publicly announce a protocol for preparing each of $S$ and $T$. Clearly this is fixed and party-independent, but fails to be secret. Finally, Alice could separately hand $S$ to Bob$_1$ and $T$ to Bob$_2$ (or vice versa). This would be fixed and secret but not party-independent.

Although the obvious strategies fail, the state-assembly task seems to be well-suited to achieving party-independent transfer. As intuition, we can note that in a state-assembly task Alice's agents, who only have access to local information and not the global set of calls made by the Bobs, may not be aware of who has received the system until a late time when she has been able to collect and compare all of the call data. Further, we have already introduced the notion of an unauthorized set of calls and can hope to exploit this to achieve the secrecy property of party-independent transfer. 

\begin{figure}
\begin{center}
\begin{tikzpicture}[scale=1]

\draw[ultra thick] (0,0,0) -- (0,3,0);
\draw[fill=black] (0,0,0) circle (0.15);
\draw[fill=blue] (0,3,0) circle (0.15);

\draw[ultra thick] (0,0,3) -- (0,3,3);
\draw[fill=black] (0,0,3) circle (0.15);
\draw[fill=blue] (0,3,3) circle (0.15);

\draw[-triangle 45][red] (0,0,0) -- (0,3,3);
\draw[-triangle 45][red] (0,0,3) -- (0,3,0);

\node [below right] at (0,0,0) {$c_{a,1}$};
\node [below right] at (0,0,3) {$c_{a,2}$};

\node [above right] at (0,3,0) {$r_{a,1}$};
\node [above right] at (0,3,3) {$r_{a,2}$};

\draw[ultra thick] (5,0,0) -- (5,3,0);
\draw[fill=black] (5,0,0) circle (0.15);
\draw[fill=blue] (5,3,0) circle (0.15);

\draw[ultra thick] (5,0,3) -- (5,3,3);
\draw[fill=black] (5,0,3) circle (0.15);
\draw[fill=blue] (5,3,3) circle (0.15);

\draw[-triangle 45][red] (5,0,0) -- (5,3,3);
\draw[-triangle 45][red] (5,0,3) -- (5,3,0);

\node [below right] at (5,0,0) {$c_{b,1}$};
\node [below right] at (5,0,3) {$c_{b,2}$};

\node [above right] at (5,3,0) {$r_{b,1}$};
\node [above right] at (5,3,3) {$r_{b,2}$};

\draw[ultra thick] (10,0,0) -- (10,3,0);
\draw[fill=black] (10,0,0) circle (0.15);
\draw[fill=blue] (10,3,0) circle (0.15);

\draw[ultra thick] (10,0,3) -- (10,3,3);
\draw[fill=black] (10,0,3) circle (0.15);
\draw[fill=blue] (10,3,3) circle (0.15);

\draw[-triangle 45][red] (10,0,0) -- (10,3,3);
\draw[-triangle 45][red] (10,0,3) -- (10,3,0);

\node [below right] at (10,0,0) {$c_{c,1}$};
\node [below right] at (10,0,3) {$c_{c,2}$};

\node [above right] at (10,3,0) {$r_{c,1}$};
\node [above right] at (10,3,3) {$r_{c,2}$};

\begin{scope}[shift={($(0,-1,0)$)}]       
    \draw[->] (-3,0,1) -- (-3,0,-0.2);
    \node [right] at (-3,0,-0.2) {$y$};
    \draw[->] (-3,0,1) -- (-2,0,1);
    \node [right] at (-2,0,1) {$x$};
    \draw[->] (-3,0,1) -- (-3,1,1);
    \node [right] at (-3,1,1) {$t$};
\end{scope}
	
\end{tikzpicture}
\end{center}
\caption{Arrangement of call-reveal pairs used in the proposed party-independent transfer protocol. Bob$_1$ controls the diamonds $D_{a,1},D_{b,1},D_{c,1}$ while Bob$_2$ controls $D_{a,2},D_{b,2},D_{c,2}$.} \label{fig:partyindependenttransfer}
\end{figure}
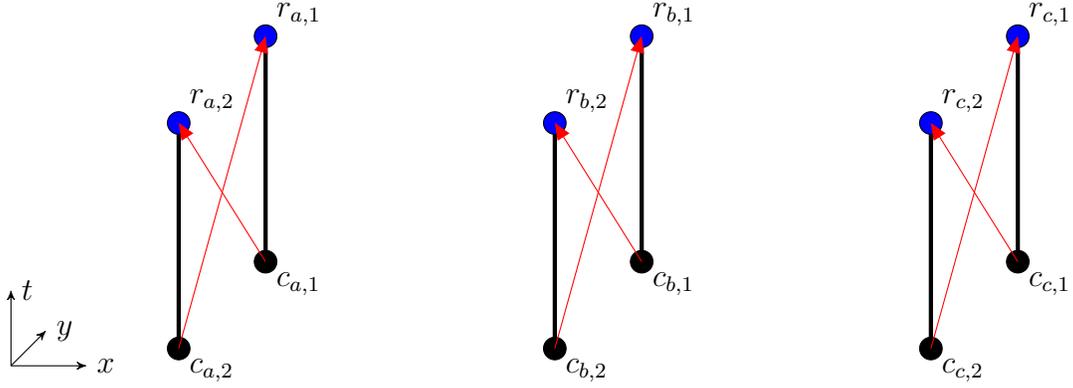

Indeed, we can put forward a candidate protocol built on a state-assembly task that seems to achieve all three security requirements of party-independent transfer. Before explaining the protocol however, we need to highlight one feature of the $((2,3))$ secret sharing scheme which will be used. We will use an error-correcting code on three physical qutrits which stores one logical qutrit. The logical states are given by
\begin{align}\label{eq:codesubspace}
\ket{0_L} &= \frac{1}{\sqrt{3}}(\ket{000}+\ket{111}+\ket{222}), \nonumber \\
\ket{1_L} &= \frac{1}{\sqrt{3}}(\ket{012}+\ket{201}+\ket{120}), \nonumber \\
\ket{2_L} &= \frac{1}{\sqrt{3}}(\ket{021}+\ket{102}+\ket{210}).
\end{align}
One may check explicitly that there exists a decoding operation $U_{12}^\dagger$ supported on the first two qutrits such that 
\begin{align}
\mathcal{U}_{12} ^\dagger \ket{i_L} = \ket{i}_1 \ket{\chi}_{23},
\end{align}
where
\begin{align}\label{eq:chistate}
\ket{\chi} = \frac{1}{\sqrt{3}} \left(\ket{00}+\ket{11}+\ket{22} \right).
\end{align}
By the symmetry in the code, a similar decoding operation exists for any subsystems of two qutrits. We wish to highlight that after the decoding operation is applied, two of the qutrits are left in a maximally entangled state. 

To construct the protocol, we will use the arrangement of diamonds shown in figure \ref{fig:partyindependenttransfer}. Bob$_1$ controls the diamonds $D_{a,1},D_{b,1},D_{c,1}$ while Bob$_2$ controls $D_{a,2},D_{b,2},D_{c,2}$. We consider a scenario where the Bobs choose at random which of them receives which system, although modifications to this are easy. We divide the protocol into a preparation phase, transfer phase, and checking phase for clarity in presentation.
\begin{protocol}\label{protocol:pit}
\textbf{Compensation protocol}
\end{protocol}
\begin{enumerate*}
\item \textbf{Preparation phase}
\begin{enumerate*}
\item Alice prepares a quantum state $\ket{\Psi}_{SR}$, and encodes the $S$ system into the $((2,3))$ secret sharing scheme using the encoding given in equation \ref{eq:codesubspace}.
\item Bob$_1$ and Bob$_2$ execute a coin flipping protocol. The outcome is not revealed to Alice. Without loss of generality, suppose that Bob$_1$ wins the coin toss, which determines that he should receive $S$.
\item Bob$_1$ chooses at random two of the three diamonds he controls and sends calls to each of them. Without loss of generality, we call these diamonds $D_{a,1}$ and $D_{b,1}$. Bob$_2$ then sends a call to the diamond he controls which is not causally connected to $D_{a,1}$ or $D_{b,1}$, which in this case is $D_{c,2}$.
\end{enumerate*}
\item \textbf{Transfer phase}
\begin{enumerate*}
\item Alice routes one share of her secret sharing scheme towards each of the diamond pairs labelled by $a,b$ and $c$.
\item Alice responds to the summons at each of the call points by comparing the calls from $D_{x,1}$ and $D_{x,2}$. If both have $b=1$ or both have $b=0$, Alice does not hand over the share to either diamond. If exactly one of the two diamonds has $b=1$, Alice hands the share over at the corresponding return point.
\end{enumerate*}
\item \textbf{Checking phase}
\begin{enumerate*}
\item Bob$_1$ applies the decoding map to his two shares, producing $\ket{\Psi}_{SR} \otimes \ket{\chi}$ where $\ket{\chi}$ is the maximally entangled state given in equation \ref{eq:chistate}. 
\item Bob$_2$ sends his share of the maximally entangled state to Bob$_1$, who then measures the pair jointly to ensure he holds $\ket{\chi}$.
\end{enumerate*}
\end{enumerate*}
In the notation of our security definition, the inputs $X_i$ by the Bobs consist of their three output bits $X_i=\{b_{i,a},b_{i,b},b_{i,c}\}$. Before the checking phase, the state is $\ket{\Psi}_{SR}\otimes \ket{\chi}$, with the receiving Bob holding $S$ and half the maximally entangled state $\ket{\chi}$, and the non-receiving Bob holding the other half of $\ket{\chi}$. After the checking phase however one Bob holds only $S$ and the $\ket{\chi}$ state has been measured. We should then identify the $T$ system of the definition as $T = \emptyset$. If we would like both Bobs to receive some quantum system we can run the protocol twice. 

We can argue for the secret and fixed properties of this protocol. Fixed is clear, since the receiving Bob can reconstruct the system from degrees of freedom that have never been held by the non-receiving Bob. Regarding secrecy, we note that the non-receiving Bob receives only one share of the secret sharing scheme, so learns no information about $S$. The non-receiving Bob may try to receive additional shares by sending additional calls, but in this case Alice will notice that calls have been made at two causally related diamonds and not hand over any shares to those diamonds. 

To argue for party-independence, note that Alice is already limited in her knowledge of who is receiving the system. Although at each pair of diamonds she knows whether she is handing a single share over to Bob$_1$ or Bob$_2$, none of Alice's agents have the global information of which Bob is receiving two shares, and thus the system $S$. Later on she will be able to collect information from all the call points and determine this, but at the spacetime points of transfer this is not known. Alice might try to have one set of shares which she hands to Bob$_1$ and a separate set to Bob$_2$, but using two unentangled sets of shares for Bob$_1$ and Bob$_2$ will lead to a failure in the checking phase. We leave proving or disproving the security of this protocol, which we regard as plausible but not obvious, to future work.

It is perhaps useful to note a connection of classical bit commitment with the party-independent transfer task. Given a bit commitment scheme which consists of 1) Alice handing a commitment to Bob, then 2) Alice later handing a reveal to Bob, which he uses to access Alice's committed bit, it is possible to construct a party-independent transfer protocol.\footnote{This was pointed out to the authors on the cryptography stack exchange \cite{stack}.} In particular, Alice publicly announces her commitment to both Bob$_1$ and Bob$_2$, then hands the reveal to only one of the Bobs. However, it is known that there are no unconditionally secure bit commitment schemes of this form \cite{lo1997quantum,mayers1997unconditionally}.

\section{Discussion}

In our first article on summoning \cite{hayden2016summoning}, we argued that the summoning task gives an operational setting in which to understand how quantum information can and cannot move through spacetime. That setting was restricted however to asking if a quantum system could be localized to collections of causal diamonds. 

In this article we have generalized in a way that allows us to ask if a quantum system is localized to a collection of arbitrary spacetime regions. We have defined the notion of localized by allowing some party with no prior knowledge of the system unrestricted access to the spacetime region. If they can later construct the system then we say it was localized there; if they learn nothing about the system we say it is excluded. This is consistent with our previous definition of localization to a diamond, in that completing the summoning task means in particular that the system was localized to each diamond. However, the notion of localization implied by summoning is stronger than the notion used in this article, since in summoning Alice must perform the data processing needed to construct the system while within the diamond. In the localize task this data processing can occur outside the region.

In the absence of gravity, where there are no known limits on the rate of computation, the strong and weak notions of localization coincide, at least for diamond-shaped regions. In the presence of gravity Lloyd argued there is a limit on computational speed \cite{lloyd2000ultimate} but there are counterexamples to his proposed bound~\cite{jordan2017fast}. It is nonetheless plausible that computational speed is limited by quantum gravity, so one can imagine a scenario where a quantum system is localized to a region in the weaker sense (in that it is possible to construct it from systems that pass through that region) but not in the strong sense (in that it is impossible to do so within the spatial-temporal extent of that region due to gravitational constraints on computation). Thus, in the presence of gravity these notions of localization plausibly become distinct. Attempts to resolve fundamental puzzles like the black hole information paradox~\cite{susskind1993stretched,hayden2007black,harlow2013quantum} have also hinted at this distinction, and indicate that it may be the stronger notion of localization for which the no-cloning theorem applies.

Also in the context of gravity the holographic bound \cite{bousso2002holographic} makes tasks with sufficiently small regions or sufficiently large numbers of regions impossible to complete, since it places a limit on how many qubits may be localized to a region of a given area without producing a black hole. Thus, we should understand the theorems given in this work as applying only in the absence of gravity. It would be interesting to perform a detailed study of exceptions to our theorems arising from gravitational physics.

By adding excluded regions to the localize task we have found a natural extension of quantum secret sharing to a spacetime setting. Indeed, the conditions for completing the localize-exclude task have close analogues in the conditions for constructing quantum secret sharing schemes, and we can embed any quantum secret sharing scheme as a carefully chosen localize-exclude task. The conditions on the start point in the localize-exclude task are somewhat awkward from this perspective, but can be seen as corresponding to certain trivial requirements in the secret sharing language. 

Since the localize-exclude task corresponds so closely to quantum secret sharing, we might expect that it doesn't provide any new tools for the construction of cryptographic protocols. From this perspective the state-assembly task is more interesting, since there we can have an unauthorized set contain an authorized one. This violates the monotonicity requirement that occurs in both localize-exclude and quantum secret sharing. 

We have given one proposed application that exploits this violation of monotonicity: party-independent transfer. This proposal is in need of a more complete study. We have not proven our proposed protocol is secure, nor considered what more practical goals within cryptography this primitive may be used to achieve. It would also be interesting to understand the relation of the proposed party-independent transfer task to established cryptographic primitives. We have already pointed out a connection to bit commitment, but there may also be interesting relations to (for instance) the spacetime analogues of oblivious transfer mentioned in the introduction. 

\section{Acknowledgements}

This work was started while the authors were at McGill University, and restarted while attending the first It from Qubit summer school at the Perimeter Institute for Theoretical Physics. This work also benefited from the Quantum Physics of Information program held at the Kavli Institute for Theoretical Physics in Santa Barbara, and from a visit by AM to the Stanford Institute for Theoretical Physics. 

AM wishes to acknowledge the UBC REX program and his mentees Andrew Chun and Liam Vanderpoel, discussions with whom motivated some of the results given here. The authors would also like to thank Adrian Kent for many helpful discussions. Daniel Gottesman pointed out to us the use for the quantum one-time pad given here as figure \ref{fig:onetimepad}. We are also grateful for discussions with Eric Hanson, Fang Xi Lin, David Stephen, Sepehr Nezami, Geoff Penington, Grant Salton and Leonard Susskind. Kevin Milner, David Wakeham, and Jason Pollack provided useful feedback on this manuscript.

Research at Perimeter Institute is supported by the Government of Canada through Industry Canada and by the Province of Ontario through the Ministry of Economic Development \& Innovation. This research was supported in part by the National Science Foundation under Grant No. NSF PHY17-48958. AM was supported by a NSERC C-GSM award, later a NSERC C-GSD award, and by the It from Qubit collaboration sponsored by the Simons Foundation. PH was supported by AFOSR (FA9550-16-1-0082), CIFAR, and the Simons Foundation. 

\appendix

\section{Summoning, state-assembly and localization} \label{appendix:equivalence}

In the main article we have discussed two related tasks: state-assembly and localize-exclude. A third task, summoning, has also been considered in earlier work \cite{hayden2016summoning}. All these tasks relate to how quantum information can move through spacetime; in this appendix we clarify the relationships among these three tasks. 

The summoning task was introduced by Kent \cite{kent2013no} and expanded on later in \cite{hayden2016summoning}. There are various variations on its definitions, as we discuss below. We give the definition from \cite{hayden2016summoning} first. 
\begin{definition}
A \textbf{single-call single-return summoning task} is a task involving two interacting agencies, Alice and Bob. The task is defined by 
\begin{enumerate}
    \item A quantum system $A$, where Bob knows the state of the purification $\ket{\Psi}_{AR}$ and holds the $R$ system.
    \item A start point $s$ at which Bob gives Alice $A$
    \item A collection of causal diamonds $D_i$, each of which is defined by a call point $c_i$ and return point $r_i$
\end{enumerate}
At each call point $c_i$ Alice receives a classical bit $b_i$. Alice is guaranteed that exactly one bit will be $1$, say $b_{i^*}=1$, and the remainder will have $b_j=0$, but does not know the value of $i^*$ in advance. To successfully complete the task, Alice should return the system $A$ to the point $r_{i^*}$ such that $b_{i^*}=1$.
\end{definition}
To complete the summoning task Alice must send systems sufficient to reconstruct the system through each diamond. Consequently, completing a summoning task with diamonds $\{D_i\}$ also completes an associated localize task with authorized regions $\mathscr{A}_i=D_i$. However, the reverse is not true: completing the localize task implies some collection of systems inside each authorized region can be used to construct the system, but doesn't require that this reconstruction can take place within the region. For instance, exhibiting the system could require the application of a high complexity circuit, perhaps requiring so many gates that gravitational speed limits would prevent their completion in the required time. Further, localize tasks deal with regions of arbitrary shape, whereas in a summoning task only causal diamond shaped regions are discussed.

The basic restriction on when a summoning task may be completed is the no-summoning theorem \cite{kent2013no}.
\begin{theorem}
A single-call single-return summoning task with two diamonds $D_1,D_2$ is impossible whenever $D_1$ and $D_2$ are causally disjoint.
\end{theorem}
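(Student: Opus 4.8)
The plan is to argue by contradiction with a no-cloning / monogamy-of-entanglement argument, in exactly the spirit of the necessity half of theorem \ref{thm:simplelocalize} and of the no-assembly theorem \ref{thm:noassembly}. Suppose a protocol completes the task with certainty. The key physical input is that the call bit $b_i$ enters the protocol only at the call point $c_i$, so I would use the causal disjointness of $D_1$ and $D_2$ to argue that whatever Alice produces at $r_1$ cannot depend on $b_2$, and symmetrically that her output at $r_2$ cannot depend on $b_1$.

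Concretely, the first step is to observe that the system Alice returns at $r_1$ is generated by a quantum operation acting on data in the causal past of $r_1$. Since $c_2 \in D_2$ and $r_1 \in D_1$, a causal curve from $c_2$ to $r_1$ would witness a causal connection between $D_2$ and $D_1$; causal disjointness forbids this, so $c_2$ lies outside the causal past of $r_1$. Hence the reduced operation delivering a system at $r_1$ is independent of the value of $b_2$, and by the mirror-image argument the operation at $r_2$ is independent of $b_1$.

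Next I would feed the protocol the off-promise input $b_1 = b_2 = 1$ as a thought experiment. Because the $r_1$ output is independent of $b_2$, it must coincide with the legitimate scenario $(b_1,b_2)=(1,0)$, in which the protocol returns $A$ at $r_1$ by assumption; hence $A$ is reconstructed perfectly at $r_1$. By the symmetric argument $A$ is simultaneously reconstructed at $r_2$. Recalling that Bob holds the purifying system $R$ of $\ket{\Psi}_{AR}$, perfect reconstruction at $r_1$ forces the returned system there to be maximally entangled with $R$, and likewise at $r_2$. But $r_1$ and $r_2$ are causally disjoint, so both returned systems coexist and could be collected together by Bob, requiring the single system $R$ to be maximally entangled with two distinct systems at once. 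This violates monogamy of entanglement (equivalently, it would clone the unknown state $A$), delivering the contradiction and completing the proof.

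The step I expect to be the main obstacle is making the locality argument fully rigorous: one must justify that the channel generating the $r_1$ output genuinely factorizes from the $b_2$ input, i.e.\ that supplying an off-promise pair of calls is a well-defined physical experiment whose $r_1$ marginal is unchanged from the on-promise case. This is precisely where the no-signalling structure of relativistic quantum operations does the work; once it is in place, the cloning/monogamy contradiction is immediate.
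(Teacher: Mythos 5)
Your proof is correct and is essentially the paper's own argument: both consider the off-promise double call $b_1=b_2=1$, use causal disjointness to show each diamond's agent cannot distinguish this from the single-call case and so must return the system at both $r_1$ and $r_2$, and derive a contradiction with no-cloning. Your added detail on factorizing the channel from $b_2$ and the monogamy-of-entanglement phrasing is just a more explicit rendering of the same indistinguishability step the paper states informally.
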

\begin{proof}\,Suppose there exists a protocol that returns the system to $r_1$ when there is a call to $c_1$ and returns the system to $r_2$ when there is a call to $c_2$. Then we can argue such a protocol can be used to clone a quantum system, and consequently no such protocol can exist. To see this, suppose there is a call to both $c_1$ and $c_2$. Then since $D_1$ and $D_2$ are causally disjoint, Alice's agent at $D_1$ cannot distinguish this case from the case where $c_1$ receives a call and $c_2$ does not. By assumption then she returns the system to $r_1$. Similarly, Alice's agent at $D_2$ returns the system at $r_2$. Alice has then handed over two copies of the quantum system. 
\end{proof}

The proof of the no-summoning theorem is similar to the proof of the no-assembly theorem we gave in the main text, see theorem \ref{thm:noassembly}. 

Similar to the localize task, summoning is possible whenever each pair of diamonds are causally connected and every diamond has a point in the future light cone of the start point. 
\begin{theorem}
The single-call single-return summoning task is possible if and only if:
\begin{enumerate*}
    \item The return point of each diamond is in the causal future of the start point.
    \item Every pair of diamonds  $(D_i,D_j)$ is causally connected. 
\end{enumerate*}
\end{theorem}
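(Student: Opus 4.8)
The plan is to mirror the structure of the proof of theorem \ref{thm:localize}, since the two stated conditions are identical and the summoning task is a close relative of the localize task. For necessity, condition (i) follows immediately from no-signalling: were some return point $r_i$ outside the causal future of $s$, then returning the system $A$ to $r_i$ on a call to $c_i$ would transmit the state of $A$ superluminally. For condition (ii), suppose toward a contradiction that some pair $(D_i,D_j)$ were causally disjoint. Given any protocol for the full task, I would restrict attention to the inputs in which calls are delivered only among $c_i$ and $c_j$; on such inputs the protocol returns $A$ to $r_i$ when $c_i$ is called and to $r_j$ when $c_j$ is called, and hence constitutes a two-diamond summoning protocol for the causally disjoint pair $D_i,D_j$. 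This is exactly the situation forbidden by the no-summoning theorem proved immediately above, giving the contradiction.

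For sufficiency I would reuse the graph-and-code machinery from theorem \ref{thm:localize}. Form the graph $G$ whose vertices are the diamonds $D_i$ and whose edges join causally connected pairs; condition (ii) makes $G$ the complete graph. As in theorem \ref{thm:localize}, encode $A$ into the codeword-stabilized code that assigns one share $S_{ij}$ to each edge $(D_i,D_j)$ and has the property that the shares on all edges incident to any single vertex suffice to reconstruct $A$. Using the two-region protocol of theorem \ref{thm:simplelocalize}, with $\mathscr{A}_1=D_i$ and $\mathscr{A}_2=D_j$, I would localize each share $S_{ij}$ to both of its diamonds, routing the entangled half along the causal connection between $D_i$ and $D_j$ and sending the classical teleportation data to the future-of-$s$ points guaranteed by condition (i). When the single call arrives at some $c_i$, every share incident to the vertex $D_i$ has by construction been made available within $D_i$, so Alice collects them, reconstructs $A$, and returns it at $r_i$.

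The step I expect to require the most care is the point at which summoning genuinely differs from the localize task: for summoning the reconstruction must be completed \emph{at} the return point, i.e.\ within the causal reach of the diamond, rather than after arbitrary later processing as definition \ref{def:localized} permits. I would therefore verify the timing explicitly, namely that a share localized to $D_i$ actually passes through the causal diamond $D_i$ and so lies in the causal past of $r_i$; this is what allows all the incident shares to be gathered and decoded at $r_i$. The existence of a code with the required reconstruction property on the complete graph is not something I would rebuild here, since it has already been constructed in \cite{hayden2016summoning} (with a more efficient version in \cite{wu2017efficient}); I would simply invoke it. Given that input, the verification reduces to checking that each incident share is causally available at $r_i$ and that condition (i) places $r_i$ where Alice can act, both of which are then immediate.
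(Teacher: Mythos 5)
Your proof is correct and follows essentially the route the paper itself indicates for this theorem (which it states with the proof omitted): reduce the many-diamond case to pairs via the complete-graph error-correcting code of theorem \ref{thm:localize}, and handle each causally connected pair by teleportation, with the classical data broadcast to the return points and the entangled half routed along the causal connection, as in figure \ref{fig:teleportation}. You also correctly isolate the one genuinely summoning-specific point --- that each share must pass through the diamond itself, so that it can be conditionally diverted to $r_i$ and decoded before the return point rather than merely recovered by later processing --- which is exactly the content of the paper's figure \ref{fig:teleportation} protocol.
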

We omit the proof of this theorem as it proceeds along now familiar lines: the many diamond case is reduced to a two diamond case by use of an error-correcting code, which can be constructed from the graph of causal connections among the regions. In the case of two diamonds we complete the task using the teleportation protocol illustrated in figure \ref{fig:teleportation}.

The summoning task as given above is ``single-call'' in that exactly one of the $b_i=1$, and ``single-return'' in that the system should be returned in full at the called-to diamond. We can generalize this to allow for Alice to receive many calls (many $b_i=1$) in two possible ways. First, we might specify that Alice return a subsystem at each called-to diamond such that taken together these subsystems can be used to reproduce the $A$ system. In this case we have weakened the requirement on Alice --- she need not hand over the quantum system itself, just quantum information and classical instructions sufficient for Bob to later construct the system. We will refer to this as many-call many-return summoning. Alternatively, we can specify that Alice hand over the system itself at one (but any one) of the called-to diamonds. We call this many-call single-return summoning. This second case is treated by Adlam and Kent \cite{adlam2015quantum} and discussed further in appendix \ref{appendix:manycallsinglereturn}. The first case is closely related to the state-assembly task. We elaborate on this relation in the remainder of this section.

\begin{figure}[t]
\begin{center}
\begin{tikzpicture}
	\coordinate (PT) at (0,0,0);
	
	\coordinate (CL) at (1.5,1.5,-2);
	\coordinate (CR) at (1.5,0,2);
	
	\coordinate (QL) at (1.5,3,-2);
	\coordinate (QR) at (1.5,3,2);
	
	\draw[dashed] (1.5,3,2) -- (1.5,3,-2) -- (1.5,0,-2) -- (1.5,0,2);
	\draw[dashed] (1.5,0,-2) -- (0,0,-2) -- (0,0,2) -- (1.5,0,2);
	\draw[dashed] (1.5,3,-2) -- (0,3,-2) -- (0,3,2) -- (1.5,3,2);
	\draw[dashed] (0,3,-2) -- (0,0,-2);
	\draw[dashed] (0,3,2) -- (0,0,2);
	
	\draw[ultra thick] (CL) -- (QL);
	\draw[ultra thick] (CR) -- (QR);
	
	\draw[fill=yellow] (PT) circle (0.15);
	\node [above left] at (PT) {$s$};
	
	\draw[fill=black] (CL) circle (0.15);
	\draw[fill=black] (CR) circle (0.15);
	
	\node [below right] at (CL) {$c_1$};
	\node [below right] at (CR) {$c_2$};
	
	\draw[fill=blue]  (QL) circle (0.15);
	\draw[fill=blue]  (QR) circle (0.15);
	
	\node [above left] at (QL) {$r_1$};
	\node [above right] at (QR) {$r_2$};
    
    \begin{scope}[shift={($(0,0)$)}]       
    \draw[->] (-3,0,1) -- (-3,0,-0.2);
    \node [right] at (-3,0,-0.2) {$y$};
    \draw[->] (-3,0,1) -- (-2,0,1);
    \node [right] at (-2,0,1) {$x$};
    \draw[->] (-3,0,1) -- (-3,1,1);
    \node [right] at (-3,1,1) {$t$};
    \end{scope}
    
	\draw[red][-triangle 45] (PT) -- (QL);
	\draw[red][-triangle 45] (PT) -- (QR);
    \draw[red][-triangle 45] (CR) -- (QL);
    
\end{tikzpicture}
\end{center}
\caption{A summoning task on two diamonds in $2+1$ dimensions. In this task $r_1$ is in the future light cone of $c_2$, but $r_2$ is not in the future of $c_1$. (In all figures, red arrows indicate causal curves.) Additionally, $r_1$ and $r_2$ are in the future light cone of $s$. To complete this summoning task, Alice pre-shares entanglement between $s$ and $c_1$. At $s$, Alice teleports the $A$ system using the shared entanglement and then sends the classical teleportation data to both $r_1$ and $r_2$. At $c_1$, Alice routes the entangled particle she holds to $r_1$ if she receives $b_1=1$, and routes the particle to $r_2$ otherwise. This example is due to Kent \cite{kent2012quantum}.} \label{fig:teleportation}
\end{figure}
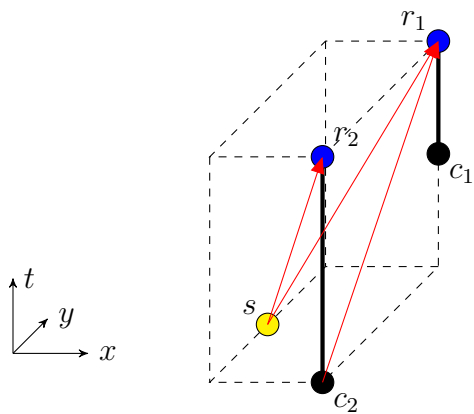

We can collect the discussion in the last paragraph into a definition of the many-call many-return summoning task.
\begin{definition}
A \textbf{many-call many-return summoning task} is a task involving two interacting agencies, Alice and Bob. A task is defined by 
\begin{enumerate}
    \item A quantum system $A$, where Bob knows the state of the purification $\ket{\Psi}_{AR}$ and holds the $R$ system
    \item A start point $s$ at which Bob gives Alice $A$
    \item A collection of authorized sets $\{\mathscr{A}_1,...,\mathscr{A}_n\}$ each consisting of one or more causal diamonds, $\mathscr{A}_i =\{D_{i1},...,D_{ik_i}\}$
\end{enumerate}
At the call point associated with each diamond Alice receives a bit $b_i$ from Bob. Alice has a guarantee that the calls will be to one of the authorized sets of diamonds. Alice is required to return a collection of classical and quantum systems at the associated $r_i$ which is sufficient to reconstruct $A$.
\end{definition}

We characterize the many-call many-return summoning tasks which are possible and those which are impossible in the following theorem.
\begin{theorem}\label{thm:manycallmanyreturn}
The many-call many-return summoning task is possible if and only if:
\begin{enumerate}
\item The return point of at least one diamond from each authorized set is in the causal future of the start point.
\item Every pair of authorized sets $(\mathscr{A}_i,\mathscr{A}_j)$ is causally connected. 
\end{enumerate}
\end{theorem}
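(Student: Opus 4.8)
The plan is to prove both directions by reusing the machinery already assembled for the localize and state-assembly tasks, since the two conditions here are word-for-word those of Theorem \ref{thm:authorizedassembly}. The only genuinely new feature is the summoning call/return structure, so the bulk of the work is checking that it introduces no obstruction beyond those already handled.

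First I would argue necessity. Condition (i) follows from no-signalling: Bob hands Alice the system $A$ only at the start point $s$, so any information about $A$ is confined to the causal future of $s$. If some authorized set $\mathscr{A}_i$ had every return point outside that future, then $s$ would lie in the past light cone of none of those return points, and the systems Alice releases there would be independent of $A$; Bob could not reconstruct $A$, a contradiction. Hence at least one return point of each authorized set must lie in the future of $s$. Condition (ii) follows from no-cloning by the argument of the no-summoning theorem and of Theorem \ref{thm:noassembly}: if $\mathscr{A}_i$ and $\mathscr{A}_j$ were causally disjoint, Bob could call both, and Alice's agents in $\mathscr{A}_i$, unable to detect the calls in $\mathscr{A}_j$, would return reconstructable data; the same holds in $\mathscr{A}_j$, producing two copies of $A$.

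For sufficiency I would give an explicit protocol following the reduction of Theorem \ref{thm:localize}. Form the directed graph $G$ of causal connections among the authorized sets; condition (ii) makes $G$ complete, so the codeword-stabilized code of \cite{hayden2016summoning} with one share per edge exists, and I would encode $A$ into it at $s$. For each edge $(\mathscr{A}_i,\mathscr{A}_j)$ I would route its share into both sets using the two-region teleportation protocol of Theorem \ref{thm:simplelocalize} and Figure \ref{fig:teleportation}, which is available precisely because the two sets are causally connected and, by condition (i), each carries a return point in the future of $s$. Since the code reconstructs $A$ from the shares on the edges incident to any single vertex, when the calls land on an authorized set $\mathscr{A}_k$ Alice simply releases at the return points of $\mathscr{A}_k$ whatever shares have been routed there, and Bob reconstructs $A$ by later data processing.

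The hard part, or at least the one subtlety worth flagging, is condition (i) demanding only one future-accessible return point per set rather than all of them. The resolution, already visible in Figure \ref{fig:teleportation}, is that pre-shared entanglement lets a return point outside the future of $s$ hold half of an entangled pair while the classical teleportation data is sent to a companion return point inside the future of $s$, the two together reconstructing the share. I would close the argument by noting that the share trajectories are fixed in advance and Alice merely releases at the return points of the called set whatever has arrived there, so the guarantee that the calls form an authorized set is all that the protocol consumes, and the proof proceeds along the now-familiar lines of the earlier theorems.
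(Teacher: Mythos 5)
Your proposal is correct and follows essentially the same route as the paper, which (in the sentence following the theorem) explicitly describes its omitted proof as reducing the many-authorized-set case to the two-set case via the error-correcting code built on the graph of causal connections, and then handling two causally connected sets with the teleportation protocol of figure \ref{fig:teleportation}. Your necessity arguments likewise mirror the no-signalling and no-cloning arguments the paper invokes in theorems \ref{thm:noassembly} and \ref{thm:authorizedassembly}, and your flagged subtlety about condition (i) is resolved exactly as in figures \ref{fig:generaltworegion} and \ref{fig:teleportation}.
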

Again we omit the proof, which follows the pattern of using the error-correcting code constructed from the graph of causal connections to reduce the many authorized set case to the two authorized set case. In the case of two authorized sets we use that the sets are causally connected, so in particular there exists a pair of causal diamonds chosen across the sets which are causally connected. We then complete the summoning task on these two diamonds using the teleportation protocol illustrated in figure \ref{fig:teleportation}.

From theorems \ref{thm:authorizedassembly} and \ref{thm:manycallmanyreturn} we find that state-assembly and summoning are possible for exactly the same arrangements of authorized sets. This is interesting, as although the tasks are similar they have one key distinction. In summoning Alice holds the $A$ system of an unknown quantum state $\ket{\Psi}_{AR}$, so can't produce copies of the $A$ system due to the linearity of quantum mechanics; in state-assembly Alice holds a known quantum state, but has the additional requirement that she hand over the $A$ system at most once. Thus, in the assembly task the requirement that Alice hand the system over at most once replaces the no-cloning restriction. The system Alice holds is essentially classical, since it is known to her and she may produce an arbitrary number of copies, but this gives her no additional power. In this sense we can view the state-assembly task as a classical analogue of the summoning task\footnote{Shortly before the publication of this manuscript reference \cite{kent2018unconstrained} appeared, which also discusses a classical version of the summoning task and its relation to the quantum one.}.

In the main article we discussed the state-assembly task with unauthorized regions. One could also consider a generalization of the summoning task with unauthorized regions, but this generalization is less well motivated. In particular, in the summoning task Bob both gives the system to Alice and requests it from her. It is unclear in what circumstance Alice would want to hide the system Bob gave to her from Bob when certain sets of calls are made. In the assembly setting this is more natural, since Alice has herself prepared the system and may want to hide it from certain subsets of other parties. 

\section{Many-call single-return summoning}\label{appendix:manycallsinglereturn}

In appendix \ref{appendix:equivalence} we discussed the many-call many-return summoning task, which we found is closely related to the state-assembly task discussed in the main article. Many-call many-return summoning is also interesting from the viewpoint of spacetime localization. In particular, completing the many-call many-return summoning task also completes the localize task. However, a second generalization of summoning to include many-calls is possible: we can consider a task with many calls but a single return, where Alice receives several calls from Bob and must return the system in full at exactly one (but any one) of the called-to diamonds.

We give a definition of the many-call single-return summoning task below.
\begin{definition}
A \textbf{many-call single-return summoning task} is a task involving two interacting agencies, Alice and Bob, defined by:
\begin{enumerate}
    \item A quantum system $A$, where Bob knows the state of the purification $\ket{\Psi}_{AR}$ and holds the $R$ system
    \item A start point $s$ at which Bob gives Alice system $A$
    \item A collection of authorized sets $\{\mathscr{A}_1,...,\mathscr{A}_n\}$ each consisting of one or more causal diamonds, $\mathscr{A}_i =\{D_{i1},...,D_{ik_i}\}$
\end{enumerate}
At the call point associated with each diamond Alice receives a bit $b_i$ from Bob. Alice has a guarantee that calls will be to one of the authorized sets. To successfully complete the task, Alice must return the $A$ system at exactly one of the called to diamonds.
\end{definition}
As defined here, the many-call single-return summoning task is somewhat more general than the task considered by Adlam and Kent. They considered in particular the case where the set of authorized sets $\{\mathscr{A}_1,...,\mathscr{A}_n\}$ corresponds to every possible subset of the diamonds. We refer to this as \textbf{unrestricted-call single-return} summoning.

Adlam and Kent characterized the full set of possible arrangements of diamonds for this unrestricted-call single-return summoning task~\cite{adlam2015quantum}. We recall their theorem here.
\begin{theorem}\label{thm:aksummoning}
\textbf{(Adlam and Kent 15')} Summoning with unrestricted calls with the requirement that Alice return the system at exactly one diamond is possible if and only if the following two conditions are true:
\begin{enumerate}
\item Every return point $r_i$ is in the future light cone of the start point $s$.
\item For any subset $\{D_{i_1},D_{i_2},...,D_{i_n}\}$ of diamonds, there is at least one diamond $D_{i_*}$ in the subset for which $r_{i_*}$ is in the future light cone of all the $c_i$ in the subset. \label{enum:many-2}
\end{enumerate}
\end{theorem}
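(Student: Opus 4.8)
The plan is to prove the two directions separately, treating the equivalence as a pair of implications. Condition (i) is the easy half of necessity: if some return point $r_i$ lies outside the causal future of $s$, I consider the singleton call pattern $b_i=1$ with all other bits zero, which is a legal unrestricted call. Single-return then forces Alice to hand over $A$ at $r_i$, but $A$ is an unknown system handed to her only at $s$, so any information about it reaching $r_i$ would violate no-signalling, exactly as in the necessity arguments behind theorems \ref{thm:noassembly} and \ref{thm:localize}. Condition (ii) is the substantive half, and is where I expect the real work to lie.

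For necessity of (ii) I would argue by contradiction from a \emph{minimal} subset $S$ for which the condition fails, i.e. no diamond $D_*\in S$ has $r_*$ in the common causal future of all call points $\{c_i : D_i\in S\}$, while every proper subset of $S$ does admit such a diamond. Because $A$ is unknown, Alice's routing of the single physical system is a function of the call bits alone, so (after fixing any shared randomness) the return location is a well-defined function $f(T)$ of the called set $T$, with $f(T)\in T$, and --- crucially --- whether the system is emitted at $r_j$ can depend only on the call bits in the causal past of $r_j$. The engine of the argument is then indistinguishability of call patterns: if two legal patterns agree on all call points in the past of a given return point, the agent there must behave identically in both. I would apply minimality to $S$ with its actual return diamond $D_a=f(S)$ removed to obtain a witness diamond, use that $S$ itself has no global witness to show this witness cannot see the removed call $c_a$, and thereby exhibit two patterns under which two distinct agents are each forced to emit the system --- a clone, contradicting single-return just as in the two-diamond no-summoning argument behind theorem \ref{thm:noassembly}. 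Pinning down this combinatorial forcing so that it genuinely yields a double emission, rather than merely a consistent relabelling of which agent emits, is the main obstacle and is what makes the Adlam--Kent condition delicate.

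For sufficiency I would construct an explicit protocol. Given the actual called set $T$, condition (ii) supplies a canonical diamond $D_*=g(T)\in T$ whose return point lies in the causal future of \emph{every} call in $T$, and I would route the single copy of $A$ so that it is emitted there. The payoff of choosing $r_*$ in the common future of all calls is twofold: the agent stationed at $r_*$ has in its causal past the complete record of which diamonds were called, so it can compute $g(T)$ and recognize itself as the unique emitter, and that same causal access lets it apply whatever classical corrections are required. Concretely I would move $A$ forward using pre-shared entanglement and teleportation as in figure \ref{fig:teleportation}, relaying the classical teleportation data toward the eventual canonical return point, with routing decisions at intermediate points driven by the calls seen so far. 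Because condition (ii) holds not only for $T$ but for every subset, the nested family of canonical choices is mutually consistent, and this is what guarantees that exactly one agent ever emits the system. This half is laborious but follows the established teleportation-plus-coding toolkit of the earlier theorems; the conceptual difficulty is concentrated in the necessity of (ii).
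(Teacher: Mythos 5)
The paper does not actually prove this theorem---it is quoted from Adlam and Kent \cite{adlam2015quantum}---so your sketch has to be judged on its own. Necessity of (i) via singleton calls is fine. For necessity of (ii), the gap you yourself flag is real and your plan does not close it: minimality of the violating set $S$ only guarantees that $S\setminus\{D_a\}$ \emph{admits} a witness diamond $D_w$, i.e.\ one whose return point could see all the remaining calls; nothing forces the protocol to emit at $D_w$ under any call pattern, so no second emission is exhibited and no clone is produced. A route that does work is to observe that condition (ii) fails exactly when the ``blindness'' digraph (an edge $i\to j$ whenever $c_i$ is not in the causal past of $r_j$) contains a directed cycle, and then to propagate \emph{forced} emissions along the cycle through chains of call patterns that specific agents cannot distinguish: in the three-diamond task of figure \ref{fig:triangles}, if the full call $\{D_0,D_1,D_2\}$ returns at $r_0$, then $r_0$ must also emit under $\{D_0,D_1\}$ (it cannot see $c_2$), while $r_1$ must emit under $\{D_0,D_1\}$ because it cannot distinguish that pattern from $\{D_1\}$, where it is the only legal return point. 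Two emissions under one legal pattern is the contradiction, in the spirit of theorem \ref{thm:noassembly}.

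Your sufficiency construction is not merely incomplete but incorrect as stated. The rule ``return at the canonical witness $g(T)$ of the called set $T$'' already fails for two diamonds arranged as in figure \ref{fig:teleportation}, where $r_1$ sees $c_2$ but $r_2$ does not see $c_1$: the unique witness of $T=\{D_1,D_2\}$ is $D_1$, yet the agent at $r_2$ receives no information about $b_1$ and therefore cannot distinguish $T=\{D_1,D_2\}$ from $T=\{D_2\}$, under which it is obliged to emit; so it emits in both cases and your rule hands over two copies. The correct protocol returns at $r_2$, not at the witness, when both diamonds are called. The underlying error is your claim that the agent at $r_*$ ``has in its causal past the complete record of which diamonds were called'': a witness of $T$ sees every call that \emph{was} made, but it cannot certify that diamonds whose call points lie outside its past were \emph{not} called, so it cannot compute $g(T)$, only $g$ of its local view. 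The repair is to use the acyclicity of the blindness digraph (equivalent to condition (ii)) to fix a topological order and return at the called diamond of \emph{highest priority} in that order; each agent is blind only to lower-priority call points, so the decision is locally computable and exactly one agent emits, after which the teleportation-plus-coding routing you cite does the quantum part.
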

Interestingly, condition (\ref{enum:many-2}) above is stronger than the corresponding condition for summoning with a single call. Adlam and Kent used this fact to argue against our interpretation of summoning in terms of localization of quantum information \cite{adlam2015quantum}; they argue that completing the summoning task depends on some resource provided to Alice by Bob --- a bit string of the form $000...010...000$ --- and thus that Alice is not localizing the system to each diamond. Instead, she is only successfully responding to the summons $b_i=1$ by exploiting her knowledge that certain other calls are $b_j=0$.

\begin{figure}
\begin{center}
\begin{tikzpicture}[scale=1]

	\coordinate (PS) at (2.5,-3.5,1.5);

	\coordinate (Ca) at (0,0,0);
	\coordinate (Cb) at (5,0,0);
	\coordinate (Cc) at (3,0,5);
	
	\coordinate (Qa) at (2.5,2.5,0);
	\coordinate (Qb) at (4, 2.5, 2.5);
	\coordinate (Qc) at (1.5, 2.5, 2.5);
	
	\draw[dashed] (Ca) -- (Cb) -- (Cc) -- (Ca);
	\draw[dashed] (0,2.5,0) -- (5,2.5,0) -- (3,2.5,5) -- (0,2.5,0);
	\draw[dashed] (0,2.5,0) -- (0,0,0);
	\draw[dashed] (5,2.5,0) -- (5,0,0);
	\draw[dashed] (3,2.5,5) -- (3,0,5);
	
	\draw[ultra thick] (Ca) -- (Qa);
	\draw[ultra thick] (Cb) -- (Qb);
	\draw[ultra thick] (Cc) -- (Qc);
	
	\draw[-triangle 45][red] (Ca) -- (Qc);
	\draw[-triangle 45][red] (Cb) -- (Qa);
	\draw[-triangle 45][red] (Cc) -- (Qb);
	
	\draw[fill=blue] (Qa) circle (0.15);
	\draw[fill=blue]  (Qb) circle (0.15);
	\draw[fill=blue]  (Qc) circle (0.15);
	
	\node [above right] at (Qa) {$r_0$};
	\node [above left] at (Qb) {$r_1$};
	\node [above right] at (Qc) {$r_2$};
	
	\draw[fill=black] (Ca) circle (0.15);
	\draw[fill=black] (Cb) circle (0.15);
	\draw[fill=black] (Cc) circle (0.15);
	
	\node [left] at (Ca) {$c_0$};
	\node [below right] at (Cb) {$c_1$};
	\node [below right] at (Cc) {$c_2$};
	
	\draw[fill=yellow] (PS) circle (0.15);
	\node [right] at (PS) {$s$};
    
    \begin{scope}[shift={($(0,-3.5,0)$)}]       
    \draw[->] (-3,0,1) -- (-3,0,-0.2);
    \node [right] at (-3,0,-0.2) {$y$};
    \draw[->] (-3,0,1) -- (-2,0,1);
    \node [right] at (-2,0,1) {$x$};
    \draw[->] (-3,0,1) -- (-3,1,1);
    \node [right] at (-3,1,1) {$t$};
    \end{scope}
	
\end{tikzpicture}
\end{center}
\caption{The three diamond task described in text. The known protocol for completing this task makes use of quantum error-correction: The system is encoded into a $((2,3))$ secret sharing scheme with one share sent to each of the call points $c_i$. The shares are then routed to $r_{i+1 \text{ mod } 3}$ if $b_i=0$, and to $r_i$ if $b_i=1$. This task is the simplest example of a summoning task which Alice can complete if there is a guarantee Bob will make only one call, but not if Bob may make an arbitrary number of calls.}\label{fig:triangles}
\end{figure}
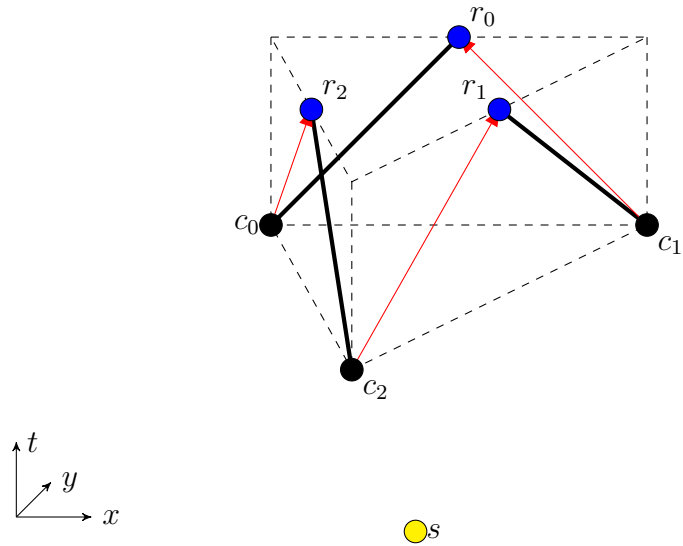

The simplest case where the conditions of many-call single-return summoning and those for many-call many-return summoning differ is the three diamond task shown in figure \ref{fig:triangles}. Consider the arrangement of diamonds shown there, and take any set of diamonds to be authorized. Then to complete the many-call many-return task Alice encodes the system $A$ into a $((2,3))$ secret sharing scheme and sends one share to each of the call points $c_i$. She then routes each share according to the bits $b_i$ she receives at each point; if $b_i=0$ she forwards the share to the next return point $r_{i+1}$, while if $b_i=1$ she sends the share to the return point $r_i$. One can readily check that if one or two calls are sent two shares will end up at a single return point, and the system is handed over at a single diamond. 

However, if a call is sent to all three diamonds, only one share ends up at each diamond. Indeed, Adlam and Kent showed that the unrestricted-call single-return task is impossible on this three diamond arrangement. This is interesting, but we argue it does not indicate that the system cannot be localized to each diamond, at least using the notion of localized we employ in this article. In the protocol using the $((2,3))$ secret sharing scheme, two shares pass through each diamond when Bob sends no calls. Someone with full access to the region enclosed by any one diamond can gather both these shares from the secret sharing scheme and later use them to construct the system. Thus, in this sense the system is localized to all three diamonds.

When Bob sends a call, however, he may prevent the system from being reproduced in certain diamonds. This is obvious in a more prosaic example: Suppose we have two diamonds, with a diamond $D_2$ far in the causal future of the diamond $D_1$. Then Bob giving a call to $D_1$ results in Alice handing the system over to Bob there, and so she does not produce the system in diamond $D_2$. One thing that is interesting about the three diamond task, as revealed by Adlam and Kent, is that in some cases Bob's calls can prevent the system from being reproduced in any diamond. In particular this can happen in cases with cyclic connections among diamonds, as in the three diamond task. 

\bibliographystyle{unsrtnat}
\bibliography{biblio}

\end{document}